\theoremstyle{plain}
\newtheorem{theorem}{Theorem}[section]
\newtheorem{corollary}[theorem]{Corollary}
\newtheorem{lemma}[theorem]{Lemma}
\newtheorem{proposition}[theorem]{Proposition}
\newtheorem{example}[theorem]{Example}
\theoremstyle{definition}
\theoremstyle{remark}
\newtheorem{remark}{Remark}
\date{}
\DeclareRobustCommand{\calA}[0]{{\cal A}}
\DeclareRobustCommand{\calD}[0]{{\cal D}}
\DeclareRobustCommand{\calF}[0]{{\cal F}}
\DeclareRobustCommand{\calN}[0]{{\cal N}}
\DeclareRobustCommand{\C}[0]{\mathbb{C}}
\DeclareRobustCommand{\E}[0]{\mathbb{E}}
\DeclareRobustCommand{\F}[0]{\mathbb{F}}
\DeclareRobustCommand{\N}[0]{\mathbb{N}}
\DeclareRobustCommand{\P}[0]{\mathbb{P}}
\DeclareRobustCommand{\Q}[0]{\mathbb{Q}}
\DeclareRobustCommand{\R}[0]{\mathbb{R}}
\DeclareRobustCommand{\U}[0]{\mathbb{U}}
\DeclareRobustCommand{\Vega}{\mathrm{V}}
\DeclareRobustCommand{\supstack}[2]{\stackrel{\mbox{{\scriptsize ${#1}$}}}{ {#2} }}     
\DeclareRobustCommand{\half}{\frac12}
\DeclareRobustCommand{\IND}{\mbox{\textsf{\sl 1}}}
\DeclareRobustCommand{\1}[1][ ]{\ensuremath{\IND_{#1}\,}}
\DeclareRobustCommand{\Lin}[1]{\mathrm{lin}\ }
\DeclareRobustCommand{\EX}[2][\E]{\ensuremath{{#1}\left[\, {#2}\, \right]}}
\DeclareRobustCommand{\sig}{\ensuremath{\sigma}}           
\DeclareRobustCommand{\eqlabel}[1]{\label{eq:#1}}                                     
\DeclareRobustCommand{\eq}[1]{\begin{equation}\eqlabel{#1}}                           
\DeclareRobustCommand{\eqx}{\begin{equation}}                                         
\DeclareRobustCommand{\eqend}{\end{equation}}                                         
\DeclareRobustCommand{\eqref}[1]{\mbox{(\ref{eq:#1})}}                                
\DeclareRobustCommand{\eqary}{\begin{eqnarray*}}
\DeclareRobustCommand{\eqaryend}{\end{eqnarray*}}
\DeclareRobustCommand{\gains}[1][a]{{#1\star DH_T}}                                     
\begin{document}


\title{Deep Hedging:  Learning to Remove the Drift under Trading Frictions with Minimal Equivalent Near-Martingale Measures}

\author{H. BUEHLER$\dag$
P. MURRAY$^\ast$$\dag$${\ddag}$\thanks{$^\ast$Corresponding author Email: phillip.murray@jpmorgan.com} M. S. PAKKANEN${\ddag}$ and B. WOOD$\dag$\\
\affil{$\dag$JP Morgan\\
$\ddag$Imperial College London}}

\maketitle

\begin{abstract}
We present a machine learning 
approach for finding
minimal equivalent martingale measures for markets  simulators of tradable instruments, e.g.~for a spot price and
options written on the same underlying. We extend our results to markets with frictions, in which case we find ``near-martingale
measures'' under which the prices of hedging instruments are martingales within their bid/ask spread.

By removing the drift, we are then able to learn using Deep Hedging a ``clean'' hedge for an exotic payoff which is not polluted by the trading strategy trying to make money from statistical arbitrage opportunities.
We correspondingly highlight the robustness of this hedge vs estimation
error of the original market simulator. We discuss applications to two market simulators.

\end{abstract}

\section{Introduction}

A long-standing challenge in quantitative finance is the development of \emph{market models} for the dynamics of tradable instruments such as a spot price and options thereon.
The classic approach to developing such models is to find model dynamics in a suitable
 parameter space under which the respective
risk-neutral drift could be computed somewhat efficiently, c.f.~for example 
\cite{SCHOEN}, \cite{WISSEL}, \cite{HJMLEVY} for the case of equity option markets.
With this approach, realistic dynamics or estimation of statistically valid parameters
are an afterthought.

This article proposes to reverse this process by starting out with training a realistic
model of the market under the statistical measure -- and then find an equivalent ``near-martingale'' measure
under which the drifts of tradable instruments are constrained by their marginal costs such that
there are no \emph{statistical arbitrage opportunities}, that is, trading strategies which produce positive expected gains. In the absence of trading costs, this means
finding an equivalent martingale measure to ``remove the drift''.

 Indeed, we will show that absence of statistical arbitrage under a given measure is equivalent to the conditional expectation of the returns under this measure
being constrained by their marginal bid/ask prices. This result is of independent interest.\\

The main motivation for the present work is the application of our \emph{Deep Hedging} algorithm to construct hedging strategies
for contingent claims by trading in hedging instruments which include derivatives such as options. When described first in~\cite{DH}, we relied on markets simulated with classic quantitative
finance models. In \cite{DHOPT} we proposed a method to build market simulators of options markets under the statistical measure. Under this measure, we will usually find \emph{statistical arbitrage} in the sense that an empty initial portfolio has positive value. This reflects the realities of historic data: at the time of writing the S\&P~500 had moved upwards over the last
ten years, giving a machine the impression that selling puts and being long the market is a winning strategy. However, naively exploiting this observation risks falling foul of the ``estimation error''
of the mean returns of our hedging instruments.
In the context of
hedging a portfolio of exotic derivatives, the presence of statistical arbitrage
is undesirable as an optimal strategy will be a combination
of a true hedge, and a strategy which does not depend on
our portfolio, but tries to take advantage of the opportunities seen the market. It
is therefore not robust against estimation error of said drifts. Hence we propose using the method presented here
to generate a ``clean'' hedge by removing the drift of the market to increase robustness against
errors in the estimation of returns of our hedging instruments.

While our examples focus on simulating equity option markets -- in this case amounting to a \emph{stochastic implied volatility} model -- our approach is by no means limited to the equities case. In fact, it is entirely model agnostic and can be applied to any market simulator which generates paths of
tradable instruments under the same numeraire which are free of classic arbitrage.

In particular, our approach can be applied to ``black box" neural network based simulators such as those using Generative Adversarial Networks (GANs) described in \cite{DHOPT} or Variational Autoencoders as in \cite{buehler2020data}. Such simulators use machine learning methods to generate realistic paths from the statistical measure, but clearly no analytic expression to describe the market dynamics can be written. Our method allows constructing an equivalent risk neutral measure through further applications of machine learning methods.

\subsection{Summary of our Approach}

Given  instrument returns $DH_t$ across discrete time steps $t\in\{0=t_0<\cdots<t_m\}$, and convex costs $c_t$ associated with trading $a_t$ units of each instrument, we propose using our \emph{Deep Hedging} algorithm introduced in~\cite{DH}
to find a trading strategy $a^*$ and cash amount $y^*$ which maximize the \emph{optimized certainty equivalent} of a utility~$u$,
\[
	a \longmapsto \sup_{y\in\R}\ \EX{ u\left(y+\sum_t a_t \cdot DH_t - c_t(a_t) \right) - y } \ .
\]
Let $m$ denote the ``marginal cost'' of trading in our market.
We then define an equivalent measure $\Q^*$ by setting
\[
	\frac{ d\Q^* }{ d\P } := u'\left( y^* + \sum_t a^*_t\cdot DH_t - m_t(a^*_t) \right) \ .
\]
Under $\Q^*$ the market has no \emph{statistical arbitrage opportunities} in the sense that there is no strategy~$a$ which has positive expected returns, i.e.~
\eq{supa}
	\sup_a\ \EX[\E^*]{ \sum_t a_t \cdot DH_t - c_t(a_t) } \leq 0 \ .
\eqend
In the absence of transaction costs, $\Q^*$ is an equivalent martingale measure. In the presence of transaction costs, we show that removing statistical arbitrage is equivalent to the measure being
 an equivalent \emph{near-martingale} measure, in the sense that the drift of all tradable instruments must be dominated by the transaction costs. Moreover, $\Q^*$ is minimal among all equivalent 
 (near-) martingale measures with respect to the
$\tilde u$-divergence from~$\P$ where $\tilde u$ is the Legendre-Fenchel transform of~$u$.

The key insight of our utility-based risk-neutral density construction is that it relies only on solving the optimization problem of finding $a^*$ and $c^*$, not 
on any particular dynamics for the market under the~$\P$ measure. Therefore, it lends itself 
to the application of modern machine learning methods.
As mentioned above, this is particularly useful in the case of removing statistical arbitrage from a ``black box" market simulator, such as the GAN based approach discussed in~\cite{DHOPT}.
Through the choice of utility function, we are able to control the risk neutral measure we construct.

We demonstrate the power of this approach with two examples of option market simulators 
for spot and a number of volatilities. Specifically,
we train a Vector Autoregressive (VAR) model of the form
\[
	dY_t = \left( B - A_1 Y_{t-1} - A_2 Y_{t-2} \right) dt + \Sigma dW_t 
\]
for a vector of log spot returns and log volatilities $Y_t = ( \log {S_t}/{S_{t-1}}, \log\sigma^1_t, \ldots, \log\sigma^K_t )'$, and also a neural network based GAN simulator, 
and then in both cases use our approach to construct the above measure such that the resulting spots and option prices are near-martingale, and free from statistical arbitrage.

\subsection{Related Work}

We are not aware of attempts to numerically solve for a risk-neutral density
with the approach discussed here, as an application to stochastic implied volatility or otherwise. To our knowledge, ours is the first practical approach for implementing general statistically trained market models under risk-neutral measures.

The classic approach to stochastic implied volatilities is via the route of identifying analytically a risk-neutral drift given
the other parameters of the specified model.
The first applicable results for a term structure of implied volatilities are due to~\cite{SCHOEN}. 
 The first viable approach to a full stochastic implied volatility model for an entire fixed strike and fixed maturity option surface
 was presented in\cite{WISSEL},
using as parametrization also \emph{discrete local volatilities}.
Wissel describes the required continuous time drift adjustment for a diffusion driving a grid of such discrete local volatilities as a function of the free parameters.
Unnaturally, in his approach the resulting spot diffusion takes only discrete values at the strikes of the options at each maturity date and the approach is limited to a set grid of options defined in cash strikes
and fixed maturities.

More recently, a number of works
have shown that when representing an option surface with a L\'{e}vy kernel we can derive suitable Heath-Jarrow-Morton conditions on the parameters of the diffusion of the L\'{e}vy kernel such that the resulting stock
price is arbitrage-free, c.f.~\cite{HJMLEVY} and the references therein. Simulation of the respective model requires solving the respective Fourier equations for the spot price and options at each step in the path.

\subsection{Outline}
The rest of the article is organised as follows. In Section \ref{frictionless} we describe the theoretical framework and introduce the key method for constructing a risk-neutral measure in the firctionless case, which is then extended to the case with market frictions in Section \ref{friction}. Then, in Section \ref{results} we describe some of the consequences of the approach from a practical persective, and in Section \ref{experiments} we provide numerical experiments demonstrating the effectiveness of the method in practice.

\section{Frictionless risk-neutral case} \label{frictionless}

Consider a discrete-time simulated financial market with finite time horizon where we trade over time steps $0=t_0<\cdots<t_m=T$ where $T$ is the maximum maturity of all tradable instruments. Fix a probability space $\Omega$ and a probability measure $\P$ under which the market is simulated, which we will refer to as the ``statistical" measure. For each $t \in \{ t_0,\ldots,t_m\}$, we denote by $s_t$ the \emph{state} of the market at time $t$, including relevant information from the past. The state represents all information available to us, including mid-prices of all tradable instruments, trading costs, restrictions and risk limits.

The sequence of states $(s_t)_{t=0,\ldots,T}$ generates a sequence $\F = (\calF_t)_{t=0,\ldots,T}$ of $\sig$-algebras forming a filtration. Being generative means
that any $\calF_t$-measurable function $f(\cdot)$ can be written as a function of $s_t$ as $f \equiv f(\cdot;s_t)$. 

To simplify notation, we stipulate that the total number of instruments at each timestep is always~$n$. Let $H_t^{(t)}=(H_t^{(t,1)},\ldots,H_t^{(t,n)})$ be the and $\R^n$-valued, $\F$-adapted stochastic process of mid-prices of the liquid instruments available to trade at $t$. As above, $H_t^{(t)}$ is a function of $s_t$, and we also assume that $H$ is in $L^1(\P)$. Note that $H_t$ can represent a wide class of instruments, including primary assets such as single equities, indices, and liquid options. 

For each instrument we observe at time $T$ a final mark-to-market mid-value $H_T^{(t,i)}$ which will usually be the sum of any cashflows along the path, and
which is also assumed to be a function of $s_T$. That means $s_T$ must contain sufficient information from the past along the path:
for example, if the $i$th instrument tradable at $t$ is a call option with relative strike $k_i$
and time-to-maturity $\tau_i \leq T-t$ on a spot price process $S_t$, then the final value of this $i$th instrument is the payoff on the path, $H_T^{(t,i)} = (S_{t+\tau_i}/S_t - k_i)^+$.
Whilst for simplicity, we assume that all options mature within the time horizon, we can easily extend our method
to the case where options are allowed to mature after $T$ by valuing them in~$T$ at mid-prices. 

We further assume that discounting rates, funding, dividends, and repo rates are zero. Extension to the case where they are non-zero and deterministic is straightforward. 

At each time step $t$ we may chose an \emph{action} $a_t \in \R^n$ to trade in the hedging instruments~$H_t^{(t)}$ based on the information available in the state $s_t$, i.e.~$a_t\equiv a(s_t)$. The $\R^n$-valued, $\F$-adapted stochastic process $a = (a_0,\ldots,a_{m-1})$ defines a \emph{trading strategy} over the time horizon. To ease notation, it will be useful to define $a^\pm_t := \max(0, \pm a_t)$ for an action $a_t$, where the operation is applied elementwise. We also use~$e^i$ to refer to the~$i$th unit vector.

We start with the frictionless case, where the actions are unconstrained, i.e. the set of \emph{admissible actions}~$\mathcal{A}_t$ is equal to~$\R^n$ for all~$t$. 
In this case the terminal \emph{gain} of implementing a trading strategy $a = (a_0,\ldots,a_{m-1})$ with $a_t\in\calA_t$ is  given by
 \eq{Ga}
 	\gains :=  \sum_{t=0}^{m-1} a_t  \cdot DH_t \ \ \ \mbox{with} \ \ \ DH_t := H^{(t)}_T - H^{(t)}_t  \ . 
 \eqend
 
Our slightly unusual notation of taking the performance of each instrument to maturity reflects our ambition to look at option market simulators for ``floating" implied volatility surfaces where
the observed financial instruments change from step to step. If the instruments tradable at each time step are, in fact, the same fixed strike and maturity instruments, then
 \eq{Gdelta}
 	\gains \equiv  \sum_{t=0}^{m-1} \delta_t  \cdot dH_t \ \ \ \mbox{with} \ \ \ dH_t := H_{t+1} - H_t \ ,
 \eqend
where $\delta_t := a_t + \delta_{t-1}$ starting with $\delta_{-1} := 0$.

\subsection{Optimized certainty equivalents}

In order to assess the performance of a trading strategy, we are looking for risk-adjusted measures of performance instead of plain expected return. 
We will focus on the following case: let~$u$ be a strictly concave, strictly increasing utility function $u$ which is $C^1$ and normalized to both $u(0)=0$ and $u'(0)=1$.\footnote{
Note that the normalization is a convenience which is always achievable: If $\tilde u$ is concave and strictly increasing, then $u(x) := (\tilde u(x) - \tilde u(0))/ \tilde u'(0)$ satisfies these assumptions.}
Examples of such utility functions are the \emph{adjusted mean-volatility} function $u(x) := ( 1+ \lambda x - \sqrt{1 + \lambda^2 x^2})/\lambda$ proposed in~\cite{VICKY}, or the exponential utility $u(x)=(1-e^{-\lambda x})/\lambda$. 
We make the further assumption that $u(\gains)\in L^1(\P)$ for all~$a$. (This condition is not met when $u$ is the exponential utility and the market contains a Black \& Scholes process with negative drift.\footnote{
	To see this, assume $m=1$, $n=1$, $H_0:=1$, and let $H_T:=\exp( (-\mu - \half \sigma^2) T + \sigma \sqrt{T} W )$ for positive~$\mu$ and~$\sigma$, 
	and~$W$ standard normal. Then $\E[ u( - (H_T-H_0)) ] = -\infty$.
})

For a given utility function, define now the \emph{optimized certainty equivalent} (OCE) of the expected utility, introduced in~\cite{BT} as
\eq{Uu}
	U(X) := \sup_{y\in\R} \Big\{  \EX{ u(y+X) } - y \Big\}  \ .
\eqend
The functional $U$ satisfies the following properties: 

\begin{enumerate}[(i)]
\item Monotone increasing: if $X \geq Y$ then $U(X) \geq U(Y)$. A better payoff leads to higher expected utility.
\item Concave: $U(\alpha X + (1 - \alpha) Y) \geq \alpha U(X) + (1-\alpha) U(Y)$ for $\alpha \in [0,1]$. Diversification leads  higher utility.
\item Cash-Invariant: $U(X+c) = U(X)+c$ for all $c\in\R$. Adding cash to a position increases its utility by the same amount.
\end{enumerate}

This above properties mean that $-U(X)$ is a convex risk measure. 
Note that the assumptions $u(0) = 0$ and $u'(0) = 1$ imply that $u(y) \leq y$ for all $y$ and hence $U(0) = 0$. Furthermore, $U$ is finite for all bounded variables~$X$, since by monotonicity we have $U(X) \leq U( \sup X) = U(0) + \sup X < \infty$. 

Cash invariance means that $U(X -U(X)) = 0$, i.e. $-U(X)$ is the minimum amount of cash that needs to be added to a position in order to make it acceptable, in the sense that $U(X + c) \geq 0$. The cash-invariance property of the OCE means in particular that optimizing $U$ does not depend on our initial wealth. 

A classic example of such an OCE measure is the case where~$u$ is the exponential utility with risk aversion level $\lambda$. In this case we
obtain the \emph{entropy}
\[
U_\lambda(X) = - \frac{1}{\lambda} \log \E \left[ e^{-\lambda X} \right] \ . 
\]

We now consider the application of the optimized certainty equivalent to the terminal gains of a trading strategy. To this end, define
\eq{f_eq}
	F(y,a) := \EX[\E]{ u\big(\,y+\gains\,\big) - y  }  \ .
\eqend
\begin{lemma}
Suppose the market exhibits classic arbitrage. Then no finite maximizers of $F$ exist.
\end{lemma}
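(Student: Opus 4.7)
The plan is to argue by contradiction using the linearity of the gains map and strict monotonicity of $u$. Suppose a classical arbitrage strategy $a^0$ exists, meaning that $\gains[a^0]\geq 0$ almost surely and $\P(\gains[a^0]>0)>0$. Now suppose for contradiction that some finite pair $(y^*,a^*)\in\R\times(\R^n)^m$ maximizes $F$.

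The key observation is that in the frictionless case with $\calA_t=\R^n$, the gain functional $a\mapsto\gains$ is linear, so for any $\lambda>0$ the perturbed strategy $a^*+\lambda a^0$ is admissible and satisfies
\[
\gains[a^*+\lambda a^0] \;=\; \gains[a^*]+\lambda\,\gains[a^0].
\]
Since $u$ is strictly increasing and $\lambda\gains[a^0]\geq 0$ almost surely with strict inequality on a set of positive $\P$-measure, we obtain
\[
u\bigl(y^*+\gains[a^*]+\lambda\gains[a^0]\bigr) \;\geq\; u\bigl(y^*+\gains[a^*]\bigr)
\]
almost surely, with strict inequality on $\{\gains[a^0]>0\}$. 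Taking expectations (which is justified by the standing assumption that $u(\gains)\in L^1(\P)$ for every $a$) yields
\[
F(y^*,a^*+\lambda a^0) \;>\; F(y^*,a^*),
\]
contradicting the assumed optimality of $(y^*,a^*)$.

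There is no real obstacle here: admissibility of the perturbation is immediate in the frictionless setting, and integrability of the perturbed utility is built into the hypotheses. One minor point worth flagging in the write-up is that the argument only rules out attainment of the supremum; whether the supremum itself is finite depends on whether $u$ is bounded above (for instance, exponential utility gives a finite but unattained supremum, whereas a utility with $u(\infty)=\infty$ forces $\sup F=\infty$ by taking $\lambda\to\infty$). Either way, no finite maximizer exists, which is what the lemma claims.
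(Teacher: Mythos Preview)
Your argument is correct and takes a genuinely different route from the paper. The paper does not argue by contradiction; instead it scales the arbitrage strategy, $a = n\tilde a$, and shows directly that $\sup_y F(y,n\tilde a)\to\infty$ as $n\to\infty$ by optimizing over the cash injection $y$ and exploiting the lower bound $\gains[\tilde a]\geq g\,\1[A]$. Your approach is more elementary: you fix a putative maximizer $(y^*,a^*)$ and produce a strict improvement $a^*+\lambda a^0$ using only linearity of the gains and strict monotonicity of $u$, without ever touching the $y$-variable or needing quantitative control of $u$. The paper's route, when it works, yields the stronger conclusion $\sup F=+\infty$ (which you correctly flag as a separate question depending on whether $u$ is bounded above), whereas your route establishes only non-attainment---but that is exactly what the lemma asserts.

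One small point of precision: your appeal to ``the standing assumption that $u(\gains)\in L^1(\P)$ for every $a$'' is not quite the right justification, since that assumption concerns $u(\gains)$ rather than $u(y^*+\gains)$. What actually carries the argument is that $(y^*,a^*)$ being a \emph{finite} maximizer forces $u(y^*+\gains[a^*])\in L^1$; then your pointwise inequality $u(y^*+\gains[a^*]+\lambda\gains[a^0])\geq u(y^*+\gains[a^*])$, strict on a set of positive measure, gives $\E[u(y^*+\gains[a^*]+\lambda\gains[a^0])]>\E[u(y^*+\gains[a^*])]$ with the left side possibly $+\infty$. Either way the contradiction stands, so this is a wording issue, not a gap.
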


\begin{proof}
Assume that $\tilde a$ is a classic arbitrage opportunity with~$\P[ \tilde a \star DH_T \geq 0] = 1$ and $\P[ A ] = p > 0$
for a set $A=\{ \tilde a\star DH_T \geq g \}$ for some $g>0$. 
Note that $\tilde a \star DH_T \geq g \1[A]$. Let $n\in\N$. Then
\[
U( n \tilde a \star DH_T) \geq \sup{}_y:\ \E[ \1[A] u(y+ng) - y ] = \sup{}_y:\ p u(y+ng) - y = (*) \ .
\]
The last term is optimized by~$\tilde y = u'{}^{-1}(1/p)-ng$ as it solves~$p u'(\tilde y+ng) = 1$. Therefore,
$(*) = p u(u'{}^{-1}(1/p))-u'{}^{-1}(1/p)+ng$ which tends to infinity as~$n\uparrow\infty$.
Hence, no finite maximizer of~\eqref{f_eq} exists.
\end{proof}

\subsection{Utility-based risk neutral densities}

In the absence of classic arbitrage, we are now able to use the optimized certainty equivalent framework to 
construct an equivalent martingale measure. Before moving on to our main result, we will need the following lemma. 

\begin{lemma}\label{lem:derivexp}
	Let $f:\R \rightarrow \R$ be concave and $C^1$.  Assume that $f(\xi)\in L^1$ for all $\xi$ and that $\E[f(\xi^*)]\geq \E[f(\xi)]$ for some $\xi^*$.
	Then
	\eq{exchangederivs}
		\partial_\epsilon|_{\epsilon=0} \E[ f(\epsilon \xi + \xi^*)  ] =  \E[ f'(\xi^*) \xi ].
	\eqend
\end{lemma}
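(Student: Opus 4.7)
The plan is to exchange differentiation and expectation via the monotone convergence theorem. Concavity of $f$ makes the difference quotient in $\epsilon$ pointwise monotone, and the optimality of $\xi^*$ then forces the resulting expectation to be finite, so no separate integrable envelope for $f'(\xi^*)\xi$ is required.

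First I would fix $\xi^*$ and $\xi$ and observe that $h(\epsilon) := f(\xi^*+\epsilon\xi)$ is almost surely concave in $\epsilon$, since $f$ is concave and $\epsilon\mapsto\xi^*+\epsilon\xi$ is affine. By the standard chord-slope property of concave functions, the difference quotient
\[
q_\epsilon \;:=\; \frac{f(\xi^*+\epsilon\xi)-f(\xi^*)}{\epsilon}
\]
is non-increasing in $\epsilon$ on $\R\setminus\{0\}$; combined with $f\in C^1$, this yields $q_\epsilon\uparrow f'(\xi^*)\xi$ as $\epsilon\downarrow 0$ and $q_\epsilon\downarrow f'(\xi^*)\xi$ as $\epsilon\uparrow 0$ almost surely. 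In particular, for $0<\epsilon\leq 1$ we have $q_\epsilon \geq q_1 = f(\xi^*+\xi)-f(\xi^*)\in L^1$, so monotone convergence gives $\E[q_\epsilon]\uparrow\E[f'(\xi^*)\xi]\in(-\infty,+\infty]$; symmetrically, for $-1\leq\epsilon<0$ we have $q_\epsilon\leq q_{-1} = f(\xi^*)-f(\xi^*-\xi)\in L^1$, giving $\E[q_\epsilon]\downarrow\E[f'(\xi^*)\xi]\in[-\infty,+\infty)$.

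Finally I would invoke the maximizer hypothesis $\E[f(\xi^*+\epsilon\xi)]\leq\E[f(\xi^*)]$, which implies $\E[q_\epsilon]\leq 0$ for $\epsilon>0$ and $\E[q_\epsilon]\geq 0$ for $\epsilon<0$; passing to the above limits forces $\E[f'(\xi^*)\xi]\in[0,0]$, hence equal to $0$ and in particular finite. It follows that both one-sided derivatives of $\epsilon\mapsto\E[f(\xi^*+\epsilon\xi)]$ at $0$ exist, coincide, and equal $\E[f'(\xi^*)\xi]$, which is exactly \eqref{exchangederivs}. The main subtlety is that without the optimality hypothesis the two one-sided monotone limits could a priori be $+\infty$ and $-\infty$ respectively, and a classical dominated-convergence approach would demand an integrable envelope for $f'(\xi^*)\xi$; it is precisely the maximizer assumption, not any a priori integrability of $f'(\xi^*)\xi$, that makes the two-sided derivative well-defined here.
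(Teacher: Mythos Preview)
Your proof is correct and follows essentially the same route as the paper's: both use concavity to make the difference quotient monotone in~$\epsilon$, the $L^1$ hypothesis to anchor it at $\epsilon=\pm 1$, and optimality of~$\xi^*$ to force finiteness of the limit. Your version is in fact more complete --- you treat both one-sided limits explicitly and extract $\E[f'(\xi^*)\xi]=0$, whereas the paper handles only $\epsilon\downarrow 0$ and (somewhat loosely) writes ``dominated convergence'' where monotone convergence is what is actually used.
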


\begin{proof}
Define
$
	\Delta_\epsilon := \frac1\epsilon \left( f(\epsilon \xi + \xi^*) - f( \xi^*) \right) 
$
such that $\Delta_\epsilon \uparrow  \partial_\epsilon|_{\epsilon=0} f(\epsilon \xi + \xi^*) = f'(\xi^*) \xi$ since $f$ is concave. As a difference
between two $L^1$ variables $\Delta_\epsilon \in L^1$. Since $\xi^*$ maximizes the expectation of $f$, we also have $\E[\Delta_\epsilon] \leq 0$.
Using the dominated convergence shows that $\Delta_\epsilon \uparrow   f'(\xi^*) \xi \in L^1$ and therefore
that taking expectations and derivatives in~\eqref{exchangederivs} can be exchanged.
\end{proof}

Now we give the main result allowing the construction of utility-based equivalent martingale measures.

\begin{proposition}
Let $y^*$ and $a^*$ be finite  maximizers of
\eq{max_me2}
	y, a \longmapsto F(y,a) := \EX[\E]{ u\big(\,y+\gains\,\big) - y  }  \ .
\eqend
Then,
\eq{Dcrm}
	D^*  := u'\big(\, y^* + \gains[a^*]\, \big) 
\eqend
is an equivalent martingale density, i.e.~the measure $\Q^*$ defined via $d\Q^*:=D^* d\P$ is an equivalent martingale measure.\footnote{
We note that if $u$ is not strictly increasing, then $D^*$ is an absolutely continuous, but possibly not equivalent density. An example
is the CVaR ``utility''$u(x) = \min\{ x, 0 \}/(1-\alpha)$.}

\end{proposition}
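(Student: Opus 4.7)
The plan is to verify three properties of~$D^*$ in turn: strict positivity $\P$-a.s.\ (so that $\Q^*\sim\P$), the normalization $\E[D^*]=1$ (so that $\Q^*$ is a probability measure), and the martingale identity $\E^{\Q^*}[H_T^{(t,i)} \mid \calF_t] = H_t^{(t,i)}$ for every $t$ and $i$. Strict positivity is immediate from the regularity of~$u$: since $u$ is strictly increasing, strictly concave, and $C^1$, the derivative $u'$ is continuous, non-negative, and non-increasing, and cannot vanish on any non-degenerate interval without contradicting strict monotonicity of~$u$; hence $u'>0$ pointwise and $D^*>0$ $\P$-a.s.

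For the normalization, I would use the first-order condition in the cash variable: since $y \mapsto F(y,a^*)$ attains its maximum at~$y^*$, Lemma~\ref{lem:derivexp} applied with $\xi^* := y^* + \gains[a^*]$ and the constant perturbation $\xi \equiv 1$ gives
\[
0 \;=\; \partial_\epsilon\big|_{\epsilon=0} \big\{ \E[u(y^* + \epsilon + \gains[a^*])] - (y^* + \epsilon) \big\} \;=\; \E[D^*] - 1.
\]

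For the martingale identity, I would perturb $a^*$ in a single instrument at a single time. Fix $t \in \{0,\dots,m-1\}$, $i \in \{1,\dots,n\}$, and let $\xi$ be any bounded $\calF_t$-measurable random variable. Define the strategy $a^{(\epsilon)}$ by $a^{(\epsilon)}_t := a^*_t + \epsilon \xi e^i$ and $a^{(\epsilon)}_s := a^*_s$ for $s\neq t$. In the frictionless setting $\calA_t = \R^n$, so $a^{(\epsilon)}$ is admissible for every~$\epsilon$, and joint maximality of~$(y^*,a^*)$ implies that $\epsilon \mapsto F(y^*,a^{(\epsilon)})$ attains its maximum at~$\epsilon=0$. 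Lemma~\ref{lem:derivexp}, now applied with perturbation $\xi DH_t^{(t,i)}$, yields
\[
0 \;=\; \partial_\epsilon\big|_{\epsilon=0} F(y^*, a^{(\epsilon)}) \;=\; \E\!\left[ D^* \, \xi \, DH_t^{(t,i)} \right].
\]
Because $\xi$ was arbitrary bounded and $\calF_t$-measurable, this is equivalent to $\E[D^* DH_t^{(t,i)} \mid \calF_t] = 0$, which by the Bayes formula under equivalent measures translates to the required martingale identity $\E^{\Q^*}[H_T^{(t,i)} \mid \calF_t] = H_t^{(t,i)}$.

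The main technical point is the verification of the integrability hypotheses of Lemma~\ref{lem:derivexp} in the $a$-perturbation step, namely that $u(y^* + \gains[a^*] + \epsilon \xi DH_t^{(t,i)}) \in L^1(\P)$ for every $\epsilon$. With $\xi$ bounded this reduces to the standing assumption $u(\gains)\in L^1$ applied to the admissible strategy $a^{(\epsilon)}$; all other manipulations are routine.
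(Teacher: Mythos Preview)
Your proposal is correct and follows essentially the same route as the paper: positivity from strict monotonicity of~$u$, the normalization $\E[D^*]=1$ from the first-order condition in~$y$, and the conditional martingale identity from perturbing~$a^*$ by $\xi e^i$ at a single time step (the paper uses indicator perturbations $\1[A_t]e^i$, which is equivalent). One small difference worth noting: the paper spends a separate step bounding $D^*$ to show $D^*\in L^1$, whereas your application of Lemma~\ref{lem:derivexp} with the constant perturbation $\xi\equiv 1$ already delivers $D^*=u'(y^*+\gains[a^*])\in L^1$ as part of its conclusion, so your ordering is slightly more economical.
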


\begin{proof}
We follow broadly the discussion in Section~3.1 of \cite{FS}.\\

\noindent
\emph{Show that $D^* (\gains)\in L^1$ with zero expectation:}
optimality of $c^*$ and $a^*$ imply first 
$0 = \partial_\epsilon\big|_{\epsilon=0} F(y^*,\epsilon a +  a^*)$.
Secondly, lemma~\ref{lem:derivexp} 
shows for arbitrary~$a$ that $D^*(\gains\in L^1)$ 
with
\eq{EDstarGa}
0 = \partial_\epsilon\big|_{\epsilon=0} F(y^*,\epsilon a +  a^*) = \EX{ D^*\ \gains } \ .
\eqend	 
				
\noindent
\emph{Show that $\E_t[D^* DH_t]=0$:}		
for the previous statement, set $a:=(0,\ldots,\1[A_t]e^i,\ldots,0)$ 
where $A_t$ is $\calF_t$-measurable, and where $e^i$ denotes the $i$th
unit vector.
We obtain
\eqx
	0 = \E[D^* DH^i_t | \calF_t ] \ .
\eqend

\noindent
\emph{Show that $D^* \in L^1$:}
recall that $\gains = \mbox{$\sum_{it}$} a^i_t\cdot DH^i_t $.
Since $u$ is concave and strictly increasing, $u'$ is decreasing and positive.
Then,
\eqary
	0  \leq u'( y^* + \gains[a^*] ) & \leq &
		u'\left( y^*+ \gains[a^*] \right)|DH| \1[ |DH| > 1 ]  \\
	&& + u'\left(y^* - \mbox{ess $\sup_{it}$} |a^*{}^i_t| \right)\1[ |DH| \leq  1 ] \in L^1
\eqaryend
since $y^*$ and $a^*$ were assumed to be finite, and since the previous step
with $a=1$ implies $u'\left( y^*+ \gains[a^*] \right)|DH|\in L^1$.

\noindent
\emph{Positivity of $D^*$:} since $u'$ is decreasing and positive
we have $\lim_{n\uparrow \infty} u'(n)\geq 0$
and therefore
$
	\P[ u'(\gains[a^*]) > 0 ]
	= 
	\lim_{n\uparrow 0} 
	\P[ u'(\gains[a^*]) > u'(n) ]
	=
	\lim_{n\uparrow 0} 
	\P[ \gains[a^*] \leq  n ]
	= \P[ \gains[a^*] < \infty] = 1
$, since $a^*$ being almost surely finite implies that $\gains[a^*]\in L^1$.

\noindent
\emph{$D^*$ has unit expectation:}
optimality of $y^*$ and $a^*$ implies
				\eq{Dst_E1}
					0 = \partial_y\big|_{y=y^*} F(y,a^*) = \EX{ D^* } - 1
				\eqend	
and therefore that $\E[ D^* ] = 1$.

\end{proof}
The density $D^*$ provides an equivalent martingale density. It is minimal among all equivalent martingale densities in the following sense:
the Legendre-Fenchel transform of the convex function $f(x):=-u(-x)$ is defined as.
\[
\tilde u(y) := \sup_{x \in \R} \{ yx - f(x) \} \ .
\]
The associated $\tilde u$-divergence between two distributions $\Q$ and $\P$ with $\Q \ll \P$ 
is then
\[
D_f (\Q | \P) = \E \left[ \tilde u \left( \frac{d\Q}{d\P} \right) \right] \ .
\]
It is a non-symmetric measure of the similarity between two probability distributions.

\begin{corollary} 
Let $
 \tilde u(y)
$ be the Legrendre-Fenchel transform of~$u$, and define $D^*$ as in \eqref{Dcrm}. Then, $D^*$ is a minimizer of the $\tilde u$-divergence 
\eq{divergence}
	D \longmapsto \EX[\E]{ \tilde u\left( D) \right) }
\eqend
over all equivalent martingale densities.
\end{corollary}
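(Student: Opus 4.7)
The plan is to exploit the Fenchel--Young inequality arising from Legendre--Fenchel duality. Since $u$ is concave and $\tilde u$ is by definition the convex conjugate of $f(x):=-u(-x)$, a short change of variables ($x \leadsto -z$) gives the pointwise inequality $\tilde u(D) \geq u(z) - D z$ for all real $D, z$, with equality precisely when $D = u'(z)$. This will serve both to produce the lower bound and to certify that $D^*$ attains it.

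First, for an arbitrary equivalent martingale density $D$ and arbitrary $y\in\R$ and admissible strategy $a$, I would apply the inequality pointwise with $z = y + \gains[a]$, obtaining $\tilde u(D) \geq u(y+\gains[a]) - D(y+\gains[a])$. Taking expectation under $\P$, using $\E[D]=1$ and the martingale property $\E[D\cdot \gains[a]] = \E_\qrob[\gains[a]] = 0$, one gets $\E[\tilde u(D)] \geq F(y,a)$. Optimizing over $y$ and $a$ on the right yields $\E[\tilde u(D)] \geq \sup_{y,a} F(y,a)$, which is a uniform lower bound valid for every equivalent martingale density.

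Second, I would verify equality at $D^*$. By construction $D^* = u'(y^* + \gains[a^*])$, which is exactly the equality case of Fenchel--Young at $z = y^* + \gains[a^*]$. So $\tilde u(D^*) = u(y^*+\gains[a^*]) - D^*(y^*+\gains[a^*])$ almost surely; taking expectations and using that $D^*$ is itself a martingale density (established in the preceding proposition, in particular \eqref{EDstarGa} and \eqref{Dst_E1}) gives $\E[\tilde u(D^*)] = F(y^*, a^*)$. Together with the lower bound from the previous step, this proves that $D^*$ minimizes $D \mapsto \E[\tilde u(D)]$ over all equivalent martingale densities.

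The main obstacle I anticipate is the rigorous justification of $\E[D\cdot \gains[a]]=0$ for a generic competing martingale density $D$ and an admissible but potentially unbounded strategy $a$. One needs integrability of $D\cdot \gains[a]$ under $\P$ in order to convert the stepwise $\qrob$-martingale property of $H$ into the desired orthogonality; the cleanest route is to restrict attention to strategies with bounded increments (or those for which $\gains[a]\in L^1(\qrob)$) and argue by approximation. A secondary, essentially harmless, subtlety is that $\tilde u(D)$ may equal $+\infty$ on certain events, in which case the inequality is trivial; one simply restricts the minimization to densities with $\E[\tilde u(D)]<\infty$, a class that contains $D^*$ by virtue of the equality $\E[\tilde u(D^*)]=F(y^*,a^*)<\infty$.
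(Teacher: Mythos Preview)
Your proposal is correct and follows essentially the same approach as the paper: both use the Fenchel--Young inequality $\tilde u(D) \geq u(x) - Dx$ (with equality at $D=u'(x)$) applied with $x = y + \gains$ to get the lower bound $\inf_D \E[\tilde u(D)] \geq F(y^*,a^*)$, and then verify equality at $D^*$ via the explicit formula $\tilde u(y) = u(u'^{-1}(y)) - y\,u'^{-1}(y)$. Your anticipated obstacle regarding $\E[D\cdot\gains]=0$ is a non-issue in the paper, since the set of competing densities $\calD_e$ is \emph{defined} as those $D>0$ with $\E[D]=1$ and $\E[D\,(\gains)]=0$ for all $a$.
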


\begin{proof}

	The Legrende-Fenchel transform of the convex function $-u(-x)$ is 
	$
		\tilde u(y) = \sup_x ( yx + u(-x) ) = \sup_x ( u(x) - yx )
	$ which implies that for all $x$, 
	\eq{tildeuinq}
		\tilde u(y) \geq u(x) - yx \ .
	\eqend	
	Let $\calD_e:=\{\ D>0 :\ \E[D]=1,\, \E[D\,(\gains)] = 0 \ \mbox{for all $a$}\ \}$ be the set
	of equivalent martingale densities. Equation~\eqref{tildeuinq} implies
	for $y\rightarrow D\in \calD_e$ and $x\rightarrow c+\gains$, 
	\eq{ineqD}
		\inf_{D\in\calD_e} \EX{ \tilde u(D) } \geq \sup_{c,a} \Big\{\EX{ u\big(c + \gains\big) }  
				- c \Big\} = F(y^*,a^*) \ .
	\eqend
	Let $I= u'{}^{-1}(\R) \subseteq (0,\infty)$.
	For a given~$y\in I$ the sup in $\tilde u(y) = \sup_x ( u(x) - yx )$ is attained by
	$x = u'{}^{-1}(y)$ which yields
	\eq{utf_eq2}
		\tilde u(y) = u\big( u'{}^{-1}(y) \big) - y\ u'{}^{-1}(y)
	\eqend
	for all $y\in I$ and all $x\in\R$ as claimed above.
	Applying~\eqref{utf_eq2} to $\tilde u(D^*)$ yields that equality of both sides of~\eqref{ineqD}, proving our claim
	that $D^*$ is indeed a minimizer of~\eqref{divergence}.

\end{proof}

Thus, finding the $\tilde u$-minimal equivalent martingale measures is the dual problem of maximizing the expected utility. The key observation is that we now have a numerically efficient method to solving the primal problem via the application of machine learning methods.

In the case of the exponential utility, the $\tilde u$-divergence is the \emph{relative entropy} of $\Q$ with respect to $\P$, 
\[
H(\Q | \P) = \E \left[ \frac{d\Q}{d\P} \log \frac{d\Q}{d\P} \right] \ .
\]
The measure ~$\Q^*$ is the \emph{minimal entropy martingale measure} (MEMM) introduced by~\cite{FRITTELLI}, given by
\[
\frac{d\Q^*}{d\P} = \frac{ e^{-\gains[a^*]}}{\E [ e^{-\gains[a^*]}]}
\]
The measure is unique due to the strict convexity of the function $\tilde u(y) = y \log y$.

\begin{remark}
In the case where the returns are normally distributed, and the utility is the exponential utility, then the optimization is easily shown to be equivalent to solving the classic mean-variance objective of Markowitz \cite{MV} $U(X) = \E[X] - 1/2 \lambda Var[X]$, and in this case the found martingale measure removes the drift while preserving the covariance of the returns. 
\end{remark}

\subsubsection*{Direct Construction of Equivalent Martingale Measures}

An alternative to the above construction is described in \cite{FS} Section~3.1, as follows.
\begin{proposition}
Define ~$u$ as above and fix some initial wealth $w_0 \in \R $. Let $a^*$ be a finite maximizer of 
\eq{max_me1}
	a\longmapsto \EX[ \E ]{  u\big(\,w_0 + \gains\,\big) } \ .
\eqend
Then, the measure $\Q^*$ with density
\eq{simpleD}
	D^*  := \frac{  u'(w_0 + \gains[a^*])  }{ \EX[ \E ]{  u'(w_0 + \gains[a^*]) } } 
\eqend
is an equivalent martingale measure.
\end{proposition}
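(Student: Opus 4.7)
The plan is to mimic the structure of the earlier proof, with the role previously played by the optimization over the cash amount~$y$ now taken over by the explicit normalization built into~$D^*$. The key tool is again Lemma~\ref{lem:derivexp}, which interchanges derivative and expectation at the optimum, together with the fact that strict monotonicity of~$u$ forces $u' > 0$.

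First I would verify that the normalizing constant $\E[u'(w_0+\gains[a^*])]$ is finite and strictly positive. Positivity is immediate since $u$ is strictly increasing, so $u'>0$ pointwise. For integrability I would replay the splitting argument from the earlier proposition, writing
\[
0 \le u'(w_0+\gains[a^*]) \le u'(w_0+\gains[a^*])\,|DH|\,\1[|DH|>1] + u'\bigl(w_0 - \esssup{}_{i,t}|a^*{}^i_t|\bigr)\,\1[|DH|\le 1],
\]
and using Lemma~\ref{lem:derivexp} applied with a constant perturbation $a\equiv 1$ to control the first term in $L^1$, while the second is bounded by the assumed almost-sure finiteness of~$a^*$ together with monotonicity of~$u'$.

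Next I would establish the martingale property. By optimality, for any bounded perturbation $a$ the map $\epsilon \mapsto \E[u(w_0 + (a^*+\epsilon a)\star DH_T)]$ is maximized at $\epsilon=0$. Lemma~\ref{lem:derivexp} with $f=u$ and $\xi^* = w_0 + \gains[a^*]$ then yields
\[
0 = \partial_\epsilon\big|_{\epsilon=0} \EX{ u\bigl(w_0 + (a^*+\epsilon a)\star DH_T\bigr) } = \EX{ u'(w_0+\gains[a^*])\,(a\star DH_T) }.
\]
Specializing to $a=(0,\ldots,\1[A_t] e^i,\ldots,0)$ for arbitrary $A_t\in\calF_t$, and dividing by the deterministic normalizing factor, delivers $\E_t[D^* DH^i_t]=0$ for every~$i$ and~$t$, which is the martingale property of the hedging instruments under~$\Q^*$.

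The remaining assertions are direct: $\E[D^*]=1$ holds by construction, and equivalence $\Q^*\sim\P$ follows from $D^*>0$ almost surely, which in turn is a consequence of $u'>0$ combined with the almost-sure finiteness of $\gains[a^*]$. The main obstacle, exactly as in the earlier proof, is the $L^1$-bound on $u'(w_0+\gains[a^*])$; once the domination argument above is in place, every other statement reduces to the first-order optimality condition supplied by Lemma~\ref{lem:derivexp}.
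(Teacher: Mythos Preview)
Your proposal is correct and follows essentially the same approach as the paper: both rely on Lemma~\ref{lem:derivexp} to convert optimality of~$a^*$ into the first-order condition $\E[u'(w_0+\gains[a^*])\,(a\star DH_T)]=0$, then specialize to $a=\1[A_t]e^i$ for the martingale property, and use the same $|DH|\gtrless 1$ splitting argument for the $L^1$-bound on~$u'$. Your ordering (integrability first, then martingale property) differs cosmetically from the paper's, and your positivity argument via strict monotonicity of~$u$ is in fact slightly cleaner than the paper's limit argument.
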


\begin{proof}
We prove that $\Q^*$ is a martingale measure.\footnote{We follow broadly section~3.1 in \cite{FS}.}

\noindent
\emph{Show $\E [ u'(w_0+\gains[a^*])\, \gains ]=0$ for all~$a$.}
For an arbitrary~$a$ we get
\eqx
	0 = \partial_\epsilon|_{\epsilon=0}  \E [  u(\,(w_0+\epsilon a+ a^*)\star DH_T\,) ]
	 \supstack{(*)}= \E [ u'(w_0+\gains[a^*])\, \gains ]  \ ,
\eqend
where $(*)$ follows from Lemma~\ref{lem:derivexp}.
Given that $a$ was arbitrary above also implies $\E [ u'(w_0+\gains[a^*])\, DH_t | \calF_t ] = 0$.

\noindent
\emph{We first prove that $u'(\gains[a^*]) \in L^1$:}
recall that $\gains = \mbox{$\sum_{it}$} a^i_t\cdot DH^i_t $.
Since $u$ is concave and increasing, $0\leq u'(x) \leq u'(x-\epsilon)$ for $\epsilon>0$.
Then,
\eqary
	0  \leq u'( w_0 + \gains[a^*] ) & \leq &
		u'\left(w_0 + \gains[a^*] \right)|DH| \1[ |DH| > 1 ]  \\
	&& + u'\left(- \mbox{ess $\sup_{it}$} |a^*{}^i_t| \right)\1[ |DH| \leq  1 ] \in L^1
\eqaryend
since $a^*$ was assumed to be finite.

\noindent
\emph{Positivity of $D^*$:} since $u'$ is decreasing and positive
we have $\lim_{n\uparrow \infty} u'(n)=0$
and therefore
$
	\P[ u'(w_0 + \gains[a^*]) > 0 ]
	= 
	\lim_{n\uparrow 0} 
	\P[ u'(w_0 + \gains[a^*]) > u'(n) ]
	=
	\lim_{n\uparrow 0} 
	\P[ w_0 + \gains[a^*] \leq  n ]
	= \P[ w_0 + \gains[a^*] < \infty] = 1
$, since $a^*$ being almost surely finite implies that $\gains[a^*]\in L^1$
\end{proof}

\begin{remark}
 We note that the assumption of finiteness of $a^*$ again excludes markets with classic arbitrage opportunities.\footnote{
 Assume that $\tilde a$ is a classic arbitrage opportunity with~$\P[ \tilde a \star DH_T \geq 0] = 1$ and $\P[ A ] = p > 0$
for a set $A=\{ \tilde a \star DH_T \geq g \}$ for $g>0$.
Then $\E[ u(w_0 +  ( n \tilde a 1_A) \star DH_T ) ] \geq \E[  1_A u(w_0 +  n g  ) ] + (1-p) u(w_0) \geq p\,u(w_0 + ng) + (1-p) u(w_0) \uparrow p\,u(\infty) + (1-p) u(w_0)$, e.g.~no finite
maximizer of $\E[ u(\gains) ]$ exists.
} If~$u$ is the exponential utility, then $D^*$ coincides with the
previously defined  density of the MEMM in~\eqref{Dcrm}.  
\end{remark}

We note that while this approach is somewhat more direct it depends on initial wealth -- except in the case of the exponential utility --
and lacks the interpretation of the 
density as a minimizer of some distance to~$\P$.

We now briefly discuss some extensions of the previous results to the cases of unbounded assets, and continuous time processes.

\subsubsection*{Unbounded Assets}

As pointed out in \cite{FS} Section~3.1, the requirement $u(\gains)\in L^1$ can  be enforced at the cost of interpretability
of our previous results by passing over to bounded asset prices:
to this end define the random variable $M:=\max_{i,t} | DH_t^i |$ and set $D\bar H^i_t := DH^i_t/(1+M)$, which are now bounded.

We can then show with the same steps as before that we can construct an equivalent martingale measure in this case as follows.

\begin{proposition}
Let $y^*$ and $a^*$ be maximizers of the bounded problem
\eqx
	y, a \longmapsto \bar F(y,a) := \EX{ 	u\left(\frac{ y + \gains }{ 1+M } \right) - y } \ .
\eqend
Then, 
\eq{dbounded}
	D^* := \frac{ u'\!\left(\frac{ y^*+ \gains[a^*] }{ 1+M } \right) }{ 1+M } 
\eqend
is an equivalent martingale density for the unscaled problem, i.e.~$\E^*[\gains]\leq 0$
for all~$a$.

Moreover, $D^*$ minimizes the scaled $\tilde u$-divergence
\eqx
	D \rightarrow \EX{ \tilde u\big( (1+M)D \big) } 
\eqend
over all equivalent martingale densities.
\end{proposition}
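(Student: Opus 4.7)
The plan is to replay the proofs of the frictionless Proposition and its Corollary, but with the raw returns $DH_t^i$ replaced by the bounded scaled returns $D\bar H_t^i := DH_t^i/(1+M)\in[-1,1]$ and the martingale density~$D^*$ replaced by the expression in~\eqref{dbounded}. Because the scaled returns are bounded by~$1$ in absolute value, the integrability conditions needed to invoke Lemma~\ref{lem:derivexp} become essentially automatic, which is the whole point of the rescaling.

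First I would extract the first-order conditions from optimality of $(y^*,a^*)$. Differentiating $\bar F$ in $y$ at $y=y^*$ gives
\eqx
0 \ =\ \EX{ \frac{u'\!\left((y^*+\gains[a^*])/(1+M)\right)}{1+M} } - 1 \ =\ \EX{D^*} - 1,
\eqend
so $\EX{D^*}=1$. Differentiating $\bar F$ in~$a$ in an arbitrary direction, via Lemma~\ref{lem:derivexp}, yields
\eqx
0 \ =\ \EX{ \frac{u'\!\left((y^*+\gains[a^*])/(1+M)\right)}{1+M}\,\gains } \ =\ \EX{D^*\,\gains}.
\eqend
Specialising to $a=(0,\ldots,\1[A_t]e^i,\ldots,0)$ with $A_t\in\calF_t$ then gives $\ECX{D^*\,DH_t^i}{\calF_t}=0$, i.e.~the martingale condition under~$\Q^*$ for the \emph{unscaled} instruments. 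Positivity of~$D^*$ follows from $u'>0$ and $1+M\geq 1$, and integrability from the fact that a.s.\ finiteness of~$a^*$ makes $(y^*+\gains[a^*])/(1+M)$ essentially bounded, so $D^*$ is dominated by $u'\!\left( y^* - \mbox{ess $\sup_{i,t}$}|a^*{}^i_t| \right)\in L^\infty$.

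For the minimisation claim I would replay the Corollary's Fenchel argument reweighted by~$1+M$. The pointwise Fenchel inequality $\tilde u(y)\geq u(x)-xy$ applied with $y=(1+M)D$ and $x=(\tilde y+\gains)/(1+M)$ reads
\eqx
\tilde u\bigl((1+M)D\bigr)\ \geq\ u\!\left((\tilde y+\gains)/(1+M)\right)\ -\ D(\tilde y+\gains).
\eqend
Taking expectations under~$\P$ and using $\EX{D}=1$ and $\EX{D\,\gains}=0$ for any equivalent martingale density~$D$ gives $\EX{\tilde u((1+M)D)}\geq \bar F(\tilde y,a)$; taking the supremum over $(\tilde y,a)$ yields $\EX{\tilde u((1+M)D)}\geq \bar F(y^*,a^*)$. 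Equality at $D=D^*$ holds because $(1+M)D^* = u'\!\left((y^*+\gains[a^*])/(1+M)\right)$ lies in the range of~$u'$, so the sup defining $\tilde u((1+M)D^*)$ is attained at $x^* = (y^*+\gains[a^*])/(1+M)$, which is precisely the argument appearing on the right-hand side.

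I expect the main obstacle to be the technicalities of applying Lemma~\ref{lem:derivexp} when the perturbation direction~$a$ ranges over all of~$\R^n$ rather than a bounded set: one must ensure that $u\!\left((y^*+(a^*+\epsilon a)\star DH_T)/(1+M)\right)$ stays in~$L^1$ uniformly for small~$\epsilon$ so expectation and derivative can be exchanged. The $(1+M)$-rescaling is designed to handle precisely this, but the argument requires either restricting a priori to essentially bounded test directions, or using a truncation/dominated-convergence argument to recover the first-order condition for general~$a$.
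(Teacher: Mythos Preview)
Your proposal is correct and follows essentially the same route as the paper: differentiate $\bar F$ in~$y$ and in~$a$ to obtain $\EX{D^*}=1$ and $\EX{D^*\,\gains}=0$, then apply the Fenchel inequality~\eqref{tildeuinq} with $y\to(1+M)D$ and $x\to(c+\gains)/(1+M)$ to get the lower bound and conclude equality at~$D^*$ via~\eqref{utf_eq2}. Your write-up is in fact more detailed than the paper's, which dispatches positivity, integrability, and the equality step with ``as before''; your closing remark on the $L^1$-control needed for Lemma~\ref{lem:derivexp} is a fair caveat that the paper does not address explicitly either.
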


\begin{proof}
We cover the main differences to the previous case: first, we see that
\eqx
	0 = \partial_y F(y^*,a^*) = \EX{ \frac{  u'\left(\frac{ y^*+\gains[a^*] }{ 1 + M} \right) }{ 1+M } - 1 } \ .
\eqend
Then,
\eqx
	0 = \partial_\epsilon F(y^*,\epsilon a + a^*) = \EX{ \frac{  u'\left(\frac{ y^*+\gains[a^*] }{ 1 + M } \right) }{ 1+M } \gains  } 
\eqend
showing that $D^*$ is an equivalent martingale density.
Using~\eqref{tildeuinq} with $y\rightarrow (1+M)D$ for $D\in\calD_e$ and $z\rightarrow (c+\gains)/(1 + M)$
yields as before
	\eqx
		\inf_{D\in\calD_e} \EX{ \tilde g\big(\, (1 + M)D\,\big) } \geq \sup_{c,a} \Big\{\EX{u\left(\frac{ c+\gains }{ 1 + M } \right) }  
				- c \Big\} \ .
	\eqend
Equality in $D^*$ follows as before.
\end{proof}

\subsubsection*{Continuous Time}
	We note that our method of proof also works in a continuous time: let $G_t(a) := \int_0^t\!\!\,a_r\,dH_r$ where $dH_t = \mu_t\,dt + \sigma_t\,dW_t$ (i.e.~the classic setup
	with fixed instruments).
	Assume that $y^*,a^*$ maximize $y,a \mapsto \E[ u( y+G_T(a) ) -y ]$ and that $\E[ u(G_T(a^*) ) ]<\infty$ which again excludes markets with classic arbitrage.
	Then, then same statement as above is true with virtually the same proof.
	
	Let $G^*_t := G_t(a^*)$ and notice that if~$u\in\C^3$ then
	$f(x) := \log u'(x)$ has derivative $f'(x) = \frac{ u''(x) }{ u'(x) }$ which
	is the \emph{Arrow-Pratt coefficient of absolute risk aversion} of $u$, c.f.~\cite{FS} section~2.3.	 Standard calculus shows
	that
	\eqx
		D^*_t = \exp\left(
			 \int_0^t\!\! \frac{ u''( G^*_r) }{ u'(G^*_r) } \sigma_r\,dW_r - \half \int_0^t\!\! \left( \frac{ u''(G^*_r) }{ u'(G^*_r) } \sigma_r \right)^2\!\! dr
		\right)
	\eqend
	Under $\Q^*$ our assets are driftless and satisfy $dH_t = \sigma_t\,dW^*_t$ for a $\Q^*$-Brownian motion $W^*$.
	This implies the well-known result
	\eqx
		\frac{ u''(G^*_t) }{ u'(G^*_t) } = \frac{ \mu_t }{ \sigma^2_t } \ .
	\eqend

\section{Transaction costs and trading constraints} \label{friction}

The previous section enables us to simulate markets from a martingale measure in the absence of trading frictions. In
practise, trading strategies will be subject to trading cost and constraints such as liquidity and risk limits.
Our use-case is training a \emph{Deep Hedging} agent. We therefore now extend the 
previous results to the case of generalized cost functions which will cover both trading cost and most trading constraints.

A~\emph{generalized cost function} is a non-negative, $\mathcal{F}_t$-measurable function $c_t(a_t)\equiv c(a_t;s_t)$ with values in $[0,\infty]$, which is convex in $a_t$, lower semi-continuous, and normalized to $c_t(0)=0$. To impose convex restrictions on our trading activity, we set transaction cost to infinity outside the admissible set. Indeed, let $\calA_t$ be 
be a convex set of admissible trading actions, and $\bar c_t$ an initial const function. We then use $c_t(a_t) := \bar c_t(a_t) + \infty \1[ a_t\not \in\calA_t]$. (We note that this
construction is lower semi-continuous.) As example, let us assume the $i$th instrument is not tradable in~$t$. We then impose $c_t(a_t) = \infty$ whenever $|a^i_t|>0$.

In reverse, if $c_t$ is a generalized cost function, we may call $\calA_t := \{ a\in\R^n:\, c_t(a)<\infty\}$  the convex set of \emph{admissible actions}. 
Note also that by construction~$0 \in \calA_t$.

\begin{example}
	The simplest trading costs are proportional. Assume that $\Delta_t$ and $\Vega_t$ are observable Black~\& Scholes delta and vega of the mid-prices $H^{(t)}_t$ for 
	the trading instruments available at~$t$, and that the cost of trading $a^i$ units of $H^{t,i}_t$ is proportional to its delta and vega
	with cost factors $g^\pm_\Delta$ and $g^\pm_\Vega$ for buying and selling, respectively.
	We also impose that we may trade at most $\Vega_\mathrm{max}$ units of vega per time step.
	The corresponding cost function is given by
	\eqx
		c_t(a) := \left\{ \begin{array}{ll}
					 a^+ \cdot \left(  g^+_\Delta \Delta_t  + g^+_\Vega \Vega_t \right) &  a\cdot \Vega_t \leq \Vega_\mathrm{max}
					  \\
					 \ \ \ \ \ + a^- \cdot \left(  g^-_\Delta \Delta_t  + g^-_\Vega \Vega_t \right) &
					 \\
					 \infty &  a\cdot \Vega_t > \Vega_\mathrm{max}
					 \end{array}
					 \right.
	\eqend
\end{example}

\begin{example}\label{ex:constricted}
	Consider trading cost which apply only to net delta and vega traded, e.g.
	\eqx
		c_t(a) := \left\{ \begin{array}{ll}
					  g^+_\Delta \left( a \cdot \Delta_t \right)^+ + g^+_\Vega \left( a \cdot \Vega_t \right)^+ &  a\cdot \Vega_t \leq \Vega_\mathrm{max}
					  \\
					 \ \ \ \ \ + g^-_\Delta \left( a \cdot \Delta_t \right)^- + g^-_\Vega \left( a \cdot \Vega_t \right)^- &
					 \\
					 \infty &  a\cdot \Vega_t > \Vega_\mathrm{max}
					 \end{array}
					 \right.
	\eqend
\end{example}	

 The terminal gain of implementing a trading policy~$a$ with cost function~$c$
 is given by
 \eq{Ga2}
 	\gains - C_T(a)  \ \ \ \mbox{where} \ \ \ C_T(a) := \sum_{t=0}^{T-1} c_t(a_t) \ .
 \eqend

The marginal cost of trading small quantities of the $i$th asset in~$t$
are given as
\eq{def_marginal}
	\gamma^{i+}_t := + \partial_{\epsilon>0} c_t(\epsilon e^i)
	\ \ \ \mbox{and} \ \ \
	\gamma^{i-}_t := - \partial_{\epsilon>0} c_t(-\epsilon e^i) \ .
\eqend
They define the \emph{marginal cost function}
\eq{def_m}
	m_t(a) := a^+ \cdot \gamma^+_t  - a^- \cdot \gamma^-_t \ ,
\eqend

\subsection{Statistical arbitrage and near-martingale measures}

Under the statistical measure we expect there to be \emph{statistical arbitrage opportunities}, i.e.~trading strategies $a$ such
that we expect to make money:
\eqx
	 \E[ \gains ] > 0 \ .
\eqend

In the absence of transaction costs, the market will be free from statistical arbitrage if and only if we are under a martingale measure.\footnote{
Assume that there is a~$t$ such that $f_t := \E[ DH_t|\calF_t ] \not= 0$. Set $a_t := \mathrm{sign}\,f_t$. Then the strategy $a=(0,\ldots,a_t,\ldots,0)$
is a statistical arbitrage strategy.}
Since the gains of trading with transaction costs are almost surely never greater than the gains in the absence of transaction costs, it is clear that if $\Q$ is an equivalent martingale measure for the market, then there are no statistical arbitrage opportunities
under transaction cost, either, i.e.~$
	\E_{\Q}\left[\, \gains - C_T(a) \right] \leq 0
$
for all policies~$a$ (equality is acheived with $a\equiv0$). Taking the limit to small transaction cost, it becomes inutitively clear
that $\E_{\Q}\left[\, \gains - M_T(a) \right] \leq 0$ as well for marginal cost. In fact, inuitively it makes sense that the market is free of statistical arbitrage
with full cost~$c$ if and only if it is free of statistical arbitrage with marginal cost~$m$.

Here is our formal result:

\begin{proposition} \label{prop:no_stat_arb}
We call $\Q$ a \emph{near martingale measure} if any of the following equivalent conditions hold:
\begin{itemize}
\item
the measure $\Q$ is free from statistical arbitrage with full cost~$c$;

\item
the measure $\Q$ is free from statistical arbitrage with marginal cost~$m$; and

\item the expected return from any hedging instrument is within its marginal bid/ask spread in the sense that
\eq{pricelimits_ei}
		\underbrace{ H^{(t,i)}_t - \gamma^{i-}_t }_{\begin{array}{c}\mbox{Marginal}\\ \mbox{bid price} \end{array}}
		\leq 
		\underbrace{ \E_\Q\big[H^{(t,i)}_T \big| \mathcal{F}_t\big]}_{\mbox{Expected gains}}
		\leq 
		\underbrace{ H^{(t,i)}_t + \gamma^{i+}_t }_{\begin{array}{c}\mbox{Marginal}\\ \mbox{ask price}\end{array}} \ ,
\eqend 
with~$\gamma^{i\pm}_t$ defined in~\eqref{def_marginal}.
\end{itemize}
\end{proposition}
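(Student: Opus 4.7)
The plan is to prove the three-way equivalence by establishing (2)$\Leftrightarrow$(3) via a direct conditional expectation argument, and then linking (1) and (2) through a scaling limit that exploits convexity of $c_t$ together with positive homogeneity of the marginal cost $m_t$. Concretely I would organise the argument as the cycle of implications (3)$\Rightarrow$(2), (2)$\Rightarrow$(3), (1)$\Rightarrow$(2), (2)$\Rightarrow$(1).

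For (3)$\Rightarrow$(2), I would fix any admissible strategy $a$ and, using $\calF_t$-measurability of $a_t$ and $m_t(a_t)$ together with the splitting $a^i_t=(a^i_t)^+-(a^i_t)^-$, condition at time $t$ to obtain
\[
\E_\Q\bigl[a_t\cdot DH_t - m_t(a_t)\,\big|\,\calF_t\bigr] = \sum_i \Bigl\{(a^i_t)^+\bigl(\E_\Q[DH^i_t\,|\,\calF_t]-\gamma^{i+}_t\bigr) - (a^i_t)^-\bigl(\E_\Q[DH^i_t\,|\,\calF_t]-\gamma^{i-}_t\bigr)\Bigr\}.
\]
Under~(3) each summand is non-positive, and iterating the tower property across time steps yields~(2). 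For (2)$\Rightarrow$(3), I would test (2) against the single-step, single-asset probes defined by $a_s:=\pm\1[A]e^i$ for $s=t$ and $a_s:=0$ otherwise, with $A\in\calF_t$ arbitrary; the no-arbitrage inequality collapses to $\E_\Q[\1[A](DH^i_t-\gamma^{i+}_t)]\leq 0$ and its lower-bound counterpart, and varying $A$ over $\calF_t$ extracts the conditional bounds of~\eqref{pricelimits_ei}.

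For (1)$\Rightarrow$(2) I would exploit the scaling structure of $c_t$. Since $\calA_t$ is convex with $0\in\calA_t$, the strategy $\epsilon a$ is admissible for every $\epsilon\in(0,1]$ whenever $a$ is, and convexity of $c_t$ with $c_t(0)=0$ forces $\epsilon\mapsto c_t(\epsilon a_t)/\epsilon$ to be non-decreasing in $\epsilon$ and to converge monotonically down to the one-sided directional derivative $c'_t(0;a_t)$ as $\epsilon\downarrow 0$. Applying (1) to $\epsilon a$, dividing by $\epsilon$, and passing to the limit via monotone convergence gives $\E_\Q[\gains]\leq\E_\Q\bigl[\sum_t c'_t(0;a_t)\bigr]$. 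Sublinearity of the positively homogeneous convex function $c'_t(0;\cdot)$, combined with the coordinate identities $c'_t(0;e^i)=\gamma^{i+}_t$ and $c'_t(0;-e^i)=-\gamma^{i-}_t$, then yields $c'_t(0;a_t)\leq m_t(a_t)$, which delivers (2). For (2)$\Rightarrow$(1), one observes that the same chord monotonicity applied coordinate-wise gives $m_t(a)\leq c_t(a)$, so that $\E_\Q[\gains-C_T(a)]\leq\E_\Q[\gains-M_T(a)]\leq 0$.

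The main obstacle is the $\Q$-integrability required to justify the monotone convergence and the tower-property manipulations, which may necessitate approximation by bounded strategies when $\gains$ or $M_T(a)$ is not known a priori to lie in $L^1(\Q)$. A more delicate point is the inequality $m_t\leq c_t$ underpinning (2)$\Rightarrow$(1): this is automatic when $c_t$ is additively separable across assets, but for convex cost functions with inter-asset cancellation such as the net-delta/vega structure of Example~\ref{ex:constricted} one has in general only $c'_t(0;\cdot)\leq m_t$, so closing the equivalence in full generality may require a separability hypothesis or a refinement of the notion of marginal cost.
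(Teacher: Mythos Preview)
Your argument is correct and close in spirit to the paper's, but organised around a different cycle. The paper proves $(1)\Rightarrow(3)\Rightarrow(2)\Rightarrow(1)$: it goes straight from absence of statistical arbitrage under full cost~$c$ to the bid/ask bounds~\eqref{pricelimits_ei} by probing with the elementary strategies $\pm\epsilon\,\1[A]\,e^i_t$ and letting $\epsilon\downarrow 0$. This is your $(2)\Rightarrow(3)$ argument applied to~$c$ instead of~$m$, and it bypasses the general scaling argument and the sublinearity bound $c'_t(0;a_t)\leq m_t(a_t)$ that you use for $(1)\Rightarrow(2)$; the paper's route is therefore a little more direct. The implications $(3)\Rightarrow(2)$ and $(2)\Rightarrow(1)$ are handled identically in both proofs.

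Your caveat about $(2)\Rightarrow(1)$ is well taken and in fact sharper than the paper's treatment. The paper closes its cycle with the bare assertion ``since $c_t\geq m_t$'', but as you note this can fail for cost functions with inter-asset netting such as Example~\ref{ex:constricted}: for $c(a)=|a\cdot v|$ one has $m(a)=\sum_i|a^i||v^i|\geq c(a)$, with strict inequality when cancellation occurs, and one can then construct a measure~$\Q$ satisfying~\eqref{pricelimits_ei} at the boundary under which a netted trade is a statistical arbitrage with respect to~$c$ but not~$m$. So the equivalence as stated requires the separability you identify (or, more generally, $m_t\leq c_t$ coordinate-wise), and your proof, by establishing $(2)\Leftrightarrow(3)$ independently and flagging the remaining step, is in this respect more careful than the paper's.
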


\begin{proof}
Assume first there are no statistical arbitrage opportunities with full cost~$c$. We will show~\eqref{pricelimits_ei}.
Let~$A\in\calF_t$ arbitrary
and let~$e^i_t$
the policy with unit vector~$e^i$ at~$t$ and zero elsewhere; for ease of notation we will also write $e^i_t$
for simply the unit vector, seen at a time~$t$.

Absence of statistical arbitrage implies that
$
0 \geq  \frac1\epsilon \E_\Q\big[  \gains[(\pm \epsilon e^i_t 1_A)] 
  - C_T(\pm \epsilon e^i_t 1_A)
\big]$  for all $\epsilon>0$,
and therefore
\[
	0 \geq
			\partial_{\epsilon>0} \E_\Q\big[ 1_A \left\{ \pm \epsilon \gains[ e^i_t ] - C_T(\pm  \epsilon e^i_t) \right\} \big]
			=
				  \E_\Q\big[ 1_A\ \big\{ DH_t^i \mp \gamma^{i\pm}_t \big\} \big] 
\]
which yields~\eqref{pricelimits_ei}.

Assume now that~\eqref{pricelimits_ei} holds, and let $a$ be arbitrary.
Then
$
		\E_\Q[ \sum{}_t a_t \cdot  DH_t - m_t(a_t)]  \leq 0
$
by construction of our marginal cost~\eqref{def_m} and~\eqref{pricelimits_ei}. Hence, there is no statistical arbitrage with cost~$m$.
Since $c_t\geq m_t$ it is also clear that if there is no statistical arbitrage with marginal cost~$m$, then there is also no statistical
arbitrage with full cost~$c$.
\end{proof}

\begin{remark}
Under the conditions of the above theorem the conditional expectation $\E_{\Q} \big[H^{(t,i)}_T \big| \mathcal{F}_t\big]$ defines a martingale ``micro-price'' \cite{STOIKOV} within the bid--ask spread.
\end{remark}

In the absence of trading costs or trading constraints then equality is acheived. That is, the market is free from statistical arbitrage if and only if 
\[
\E_\Q\big[H^{(t,i)}_T \big| \mathcal{F}_t\big] = H^{(t,i)}_t \ .
\]
resulting in the classic formulation of the price process being a martingale under $\Q$.

\subsection{Utility-based near-martingale measures under trading frictions}

We now proceed with constructing a near-martingale measure~$\Q^*$ via the same duality as in the zero transaction cost case.
Define again the function 
\eq{max_me2tc}
	F(y,a) := \EX{ u\big( y + \gains - M_T(a) \big) - y }
\eqend
just as in~\eqref{max_me2}, but this time with marginal transaction costs. 

\begin{proposition}
Let $y^*$ and $a^*$ be finite maximizers of $y, a \mapsto F(y,a)$.  Then
\eqx
	D^* := u'\!\left( y^* + \gains[a^*]  -  M_T(a^*)  \right) 
\eqend
is an equivalent density, and the measure $\Q^*$ defined by $d\Q^*:=D^*d\P$ is a near-martingale measure.
Moreover, the density $D^*$ minimizes the $\tilde u$-divergence among all equivalent near-martingale densities.
\end{proposition}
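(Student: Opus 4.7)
The plan is to mirror the frictionless proof of Proposition 2.4, with the single genuine novelty being the handling of the non-smooth term $M_T(a) = \sum_t m_t(a_t)$. The starting observation is that each marginal cost $m_t$ is sublinear: by construction it is convex with $m_t(0)=0$ and positively homogeneous ($m_t(\lambda a) = \lambda m_t(a)$ for $\lambda\geq 0$), so $m_t(a^*_t + \epsilon a_t) \leq m_t(a^*_t) + \epsilon m_t(a_t)$ for $\epsilon>0$, giving
\[
\partial_{\epsilon>0}\big|_{\epsilon=0}\, m_t(a^*_t + \epsilon a_t) \ \leq \ m_t(a_t).
\]
This replaces the clean differentiability used in the frictionless case.

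Write $\Phi(a) := \gains - M_T(a)$. Since $u$ is concave and increasing and $\Phi$ is concave in $a$, the map $\epsilon\mapsto F(y^*,a^*+\epsilon a)$ is concave and maximized at $\epsilon=0$, so its right derivative at $0$ is $\leq 0$. A one-sided version of Lemma~\ref{lem:derivexp} (the difference quotients are still monotone by concavity, so dominated convergence applies as before) lets me exchange derivative and expectation to obtain
\[
0 \ \geq\ \partial_{\epsilon>0}\big|_{\epsilon=0} F(y^*,a^*+\epsilon a)
\ =\ \E\!\left[D^*\Big(a\star DH_T - \partial_{\epsilon>0}\big|_{\epsilon=0} M_T(a^*+\epsilon a)\Big)\right]
\ \geq\ \E\!\left[D^*\big(a\star DH_T - M_T(a)\big)\right],
\]
the last inequality using $D^*\geq 0$ and the sublinearity bound above. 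So $\Q^*$ is free of statistical arbitrage under marginal cost $m$, which by Proposition~\ref{prop:no_stat_arb} is the near-martingale property. The remaining items — $\E[D^*]=1$ from $\partial_y F(y^*,a^*)=0$, positivity of $D^*$ from $u'>0$ together with almost-sure finiteness of $a^*$, and $D^*\in L^1$ — carry over verbatim from Proposition~2.4, with only the trivial book-keeping that $M_T(a^*)$ is nonnegative and a.s.\ finite.

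For minimality, I would apply the Fenchel inequality $\tilde u(y)\geq u(x)-xy$ with $y\leftarrow D$ and $x\leftarrow c+\Phi(a)$ for an arbitrary equivalent near-martingale density $D$. Using $\E[D]=1$ and the near-martingale bound $\E[D\,\Phi(a)]\leq 0$, one gets $\E[\tilde u(D)]\geq F(c,a)$ for every admissible $c,a$; taking the sup over $(c,a)$ yields $\E[\tilde u(D)]\geq F(y^*,a^*)$. On the other hand, at $D=D^*$ the Fenchel inequality becomes an equality because $u'{}^{-1}(D^*)=y^*+\Phi(a^*)$ attains the sup defining $\tilde u(D^*)$, and so
\[
\E[\tilde u(D^*)] \ =\ F(y^*,a^*) \ -\ \E[D^*\Phi(a^*)].
\]
Thus the whole argument reduces to proving the equality $\E[D^*\Phi(a^*)]=0$.

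The main obstacle — and the one place positive homogeneity of $m_t$ is really essential — is exactly that last identity, because the one-sided perturbations used above only give $\E[D^*\Phi(a)]\leq 0$. To upgrade to equality at $a^*$, I would perturb along the ray $\lambda\mapsto\lambda a^*$: by positive homogeneity $M_T(\lambda a^*)=\lambda M_T(a^*)$, so $\Phi(\lambda a^*)=\lambda\Phi(a^*)$ is \emph{linear} in $\lambda$, and $\lambda\mapsto F(y^*,\lambda a^*)$ is genuinely differentiable in both directions at $\lambda=1$ with derivative $\E[D^*\Phi(a^*)]$. Since $\lambda=1$ is an interior maximizer, this two-sided derivative must vanish, which gives $\E[D^*\Phi(a^*)]=0$ and hence $\E[\tilde u(D^*)]=F(y^*,a^*)=\inf_D \E[\tilde u(D)]$, completing the proof.
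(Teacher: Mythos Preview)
Your argument is correct, but it follows a genuinely different route from the paper's.

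For the near-martingale property, the paper perturbs along coordinate directions $\pm\epsilon\,\1[A]\,e^i_t$ and performs a case analysis on the sign of $a^{*i}_t$, computing $\pm\partial_{\epsilon>0} m_t(\pm\epsilon e^i_t+a^*_t)$ explicitly in each case. This yields the sharper statement that $\E^*[DH^i_t\mid\calF_t]$ actually \emph{equals} the relevant marginal cost $\pm\gamma^{i\pm}_t$ whenever $a^{*i}_t\neq 0$, with the near-martingale inequalities only when $a^{*i}_t=0$. Your sublinearity argument ($m_t$ convex, positively homogeneous, hence subadditive) bypasses the case analysis entirely and lands directly on $\E^*[\Phi(a)]\leq 0$ for all $a$; it is more streamlined but gives up the sharper pointwise information.

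For the minimality part, both proofs reduce to the identity $\E^*[\Phi(a^*)]=0$. The paper states this without a separate argument, relying implicitly on the coordinate-wise equalities just obtained (summing $a^{*i}_t\,\E^*[DH^i_t\mid\calF_t]=a^{*i}_t(\pm\gamma^{i\pm}_t)$ recovers exactly $m_t(a^*_t)$). Your ray perturbation $\lambda\mapsto\lambda a^*$ is a cleaner and more self-contained way to get the same equality: positive homogeneity of $M_T$ makes $\Phi(\lambda a^*)$ linear in $\lambda$, so the derivative at the interior optimum $\lambda=1$ is two-sided and vanishes. This isolates precisely where positive homogeneity of the \emph{marginal} cost (as opposed to the full cost $c_t$) is indispensable, which the paper's proof leaves somewhat implicit.
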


\begin{proof}
To show that $D^*$ is a equivalent near- martingale density most of the previous proof
applies as before, except of course~\eqref{EDstarGa} since $D^*$ is not an equivalent 
martingale measure. Instead, we will show that there is no statistical arbitrage
under $\Q^*$. Let therefore $A \in \calF_t$ be arbitrary, and denote by~$e^i_t$ the
strategy with unit vector $e^i$ in~$t$ and zero otherwise; for notational simplicity we will
also use  $e^i_t$ to refer simply the unit vector, in~$t$.

Define $F_\pm(\epsilon) := \pm \EX[\E]{ u\big( y^* + \gains[( \pm \epsilon {\1[A]} e^i_t + a^*)] - M_T( \pm \epsilon \1[A] e^i_t + a^* ) \big) - y^* }$.
Consider the derivative $
	\partial_\epsilon F_\pm(0) = 
	\EX[\E^*]{
		\1[A] \left\{
			DH^i_t \pm \partial_{\epsilon>0} m_t(\pm \epsilon e^i_t + a^*_t )
		\right\}
	} 
$: 
we recall that $m_t(a) = a^+_t \cdot \gamma^+_t - a^-_t \gamma^-_t$. Therefore
\[
	(*) = \pm  \partial_{\epsilon>0} m_t(\pm \epsilon e^i_t + a^*_t ) =
		\left\{
			\begin{array}{ll}
						-   \gamma^{i+}_t & \mbox{if $a^{*i}_t > 0$,} \\
						+   \gamma^{i-}_t & \mbox{if $a^{*i}_t < 0$,} \\
						\mp \gamma^{i\pm}_t & \mbox{if $a^{*i}_t = 0$.} \\
			\end{array}
		\right.
\]
Since $(a^*, y^*)$ are optimal we must have $0\in [ \min (*), \max (*) ]$. Given that~$A\in\calF_t$ was arbitrary
we obtain
\eqx
		\left\{
			\begin{array}{rlll}
						                  & \E^*[ DH_t | \calF_t ] & = + \gamma^{i+}_t        & \mbox{if $a^{*i}_t > 0$,} \\
						- \gamma^{i\pm}_t \leq &  \E^*[ DH_t | \calF_t ] & \leq + \gamma^{i+}_t & \mbox{if $a^{*i}_t = 0$, and } \\
						 - \gamma^{i-}_t  = & \E^*[ DH_t | \calF_t ] &                         & \mbox{if $a^{*i}_t < 0$.} \\
			\end{array}
		\right.
\eqend
This is in fact a more precise statement than~\eqref{pricelimits_ei}.

We now show that $D^*$ minimizes the $\tilde u$-divergence among all measures $D\in\calD_e := 
\{\ D>0:\ \E[D]=1, \E[D\,( \gains -  M_T(a))]\leq 0\ \mbox{for all~$a$}\ \}$. We apply~\eqref{tildeuinq} again
with $y\rightarrow D\in\calD_e$ and $x\rightarrow y + \gains - M_T(a)$. This yields
\eqx
	\E[ \tilde u(D) ] \geq \E[ u(y+\gains - M_T(a))  ] - \E[ D\,(y + \gains  - M_T(a)) ] 
	                   \geq \E[ u(y+\gains - M_T(a)) - y] \ ,
\eqend
where the last inequality holds since $D$ does not admit statistical arbitrage.
The right hand side is maximized in~$(a^*,y^*$).
For the left hand side, apply
again~\eqref{utf_eq2} which yields
\eqx
\E[ \tilde u(D^*) ] =\E[ u(y^*+\gains[a^*] - M_T(a^*) ) - y^* ] - \underbrace{ \E^*[ \gains[a^*] - M_T(a^*)  ] }_{=0} \ . 
\eqend
This proves that $D^*$ is $\tilde u$-minimal among all near-martingale measures.
\end{proof}

Considering that any equivalent true martingale measure is also a near-martingale measure, this result is
a formalization of the intuitive notion that in order to avoid statistical arbitrage we do not truly have to 
find a full martingale measure, but that we only have to ``bend''
the drifts of our trading instruments enough to be dominated by prevailing trading cost.

\section{Learning to Simulate Risk-Neutral Dynamics} \label{results}

The key insight of our utility-based risk-neutral density construction is that it relies only on solving the optimization problem of find $a^*$ and $y^*$, not 
on specifying any particular dynamics for the market under the $\P$ measure. Therefore, it can be done in a data-driven, model agnostic way, lending itself 
to the application of modern machine learning methods. Specifically, given a set of $N$ samples from a $\P$ market simulator, we may make the sample set risk neutral by
numerically solving the optimization problem on the $N$ paths, and then using our formulation to reweight the paths so that the resulting weighted sample is a (near-)martingale.
As mentioned above, this is particularly useful in the case of removing statistical arbitrage from a ``black box" market simulator, such as the GAN based approach discussed in~\cite{DHOPT}.

Our approach enables the adaptation of GAN and other advanced machine learning approaches so that they can not only simulate realistic samples from the statistical measure, but also from an equivalent risk neutral measure. Moreover, through the choice of utility function, we are able to control the risk neutral measure we construct.

We solve the stochastic control problem \eqref{max_me2} through an application of the `Deep Hedging' methods of \cite{DH}: we can pose \eqref{max_me2} as a reinforcement learning problem and use a neural network to represent our trading policy $a$, and since the function $F$ is fully differentiable, use stochastic gradient methods to find $a^*, y^*$, and hence $D^*$.

\subsection{Deep Hedging under Risk-Neutral Dynamics}

Our primary application is in the pricing and hedging of exotic options via utility-based Deep Hedging. With a portfolio of derivatives represented by the random variable $Z$ to hedge, the Deep Hedging problem under the statistical measure~$\P$ is to maximize the optimized certainty equivalent
\eq{dh_max_me2tc}
	\U_\P (Z) := \sup_{a,y}\ \EX[\E_\P]{ u\big( y + Z + \gains - C_T(a) \big) - y }
\eqend
over strategies~$a$ and~$y\in\R$. An optimal solution $a^*_\P$ is called an optimal hedge for~$Z$. We note that in the presence of statistical arbitrage $\U^*(0)>0$.
Deep Hedging under a near-martingale measure~$\Q^*$ then is 
\eq{dhrn_max_me2tc}
	\U^*(Z) := \sup_{a,y}\ \E^* \left[ u\big( y + Z + \gains- C_T(a) \big) - y \right].
\eqend

Since under $\Q^*$ we have $\U^*(0)=0$ we note that $\U^*$ represents an \emph{indifference price for~$Z$} in the sense of~\cite{DH} section~3.

In the case of hedging under exponential utility with zero transaction costs, it is straightforward to show that the optimal hedge for the derivative $Z$ under the statistical measure can be written as
\eqx
 	a^*_\P = a_{\Q}^* + a_0^* \ ,
\eqend
where $a_{\Q}^*$ is an optimal hedge for $Z$ under the minimal entropy martingale measure (MEMM), and where $a_0^*$ is an optimal statistical arbitrage strategy, i.e.~an optimal ``hedge'' for an empty initial portfolio.
In this sense we may regard $a^*_\Q$ as a \emph{net hedging strategy} for~$Z$.

A~consequence of this is that the hedge found by solving the Deep Hedging problem under the statistical measure will be a sum of a true hedge, and a component which does not depend on $Z$ and is simply seeking profitable opportunities in the market. Solving the optimization problem under the risk neutral measure will then directly remove the statistical arbitrage component of the strategy, leaving a ``clean" hedge for the derivative, which is not sensitive to the estimation of the mean returns of our hedging instruments.

\section{Numerical implementations} \label{experiments}

To demonstrate our approach we apply it to two market simulators. First, we discuss a simple, but usable multivariate ``PCA'' vector autoregressive model for spot and a form of implied volatilities. Secondly, we
also present results for a Generative Adversarial Network based simulator based on the ideas presented in~\cite{DHOPT}.

\subsection{Vector Autoregressive market simulator}

For the first numerical experiment, we build a VAR market simulator as follows. For the simulation we use \emph{discrete local volatilies} (DLVs) as  arbitrage-free parametrization of option prices. We do not use the underlying model dynamics; the only use of DLVs is arbitrage-free parametrization of the option surface. We briefly recap the relevant notation:
assume thaty $0=\tau_0<\tau_1<\cdots<\tau_m$ are time-to-maturities and $0<x_1<\ldots<1<\ldots<x_n$  relative strikes.\footnote{See~\cite{DLV} for the use of inhomogeneous strike grids.} 
Also define the additonal boundary strikes $0\leq x_0\ll x_1$ and $x_{n+1}:=1+2 x_n \gg x_n$.

For $i=1,\ldots,n$ and $j=1,\ldots,m$ we denote by $C^{j,i}$ the price of the call option with payoff $(S_{\tau_j}/S_0 - x_i)^+$ at maturity $\tau_j$.
Define $
	\Delta^{j,i} := \frac{ C^{j,i+1} - C^{j,i} }{ x_{i+1} - x_i }$, $\Gamma^{j,i} := \Delta^{j,i} - \Delta^{j,{i-1}}$
and
$
	 \Theta^{j,i} := \frac{ C^{j,i} - C^{j,i-1} }{ \tau_j - \tau_{j-1} }
$.
The \emph{discrete local volatility} surface $(\sigma^{j,i})_{j,i}$ for $j=1,\ldots,m$ and $i=1,\ldots,n$ is defined by
\eq{sigmadef}
	\sigma^{j,i} := \sqrt{ \frac{2\, \Theta^{j,i} }{  x^{j,i}{}^2 \Gamma^{j,i} } }  \ ,
\eqend
where we set $\sigma^{j,i}=\infty$ whenever the square root is imaginary. We also set $\sigma^{j,i} = 0$ if $0/0$ occurs.
We recall that a given surface of option prices is free of static arbitrage if and only if $\sigma<\infty$.
Moreover, given a surface of finite discrete local volatilities, we can reconstruct the surface of arbitrage-free call prices by solving the implicit finite difference scheme implied by~\eqref{sigmadef}.
This involves inverting sequentially $m$ tridiagonal matrices, an operation which is ``on graph" in modern automatic adjoint differentiation (AAD)
machine learning packages such as TensorFlow, see~\cite{DLV} for further details. 

For a given time series of vectors of historical log spot returns and log DLVs
\eqx
	Y_r = \left( \log\frac{ S_r }{ S_{r-1}} , \log \sigma^{1,1}_r, \ldots ,  \log \sigma^{m,n}_r \right )' \ ,
\eqend
we estimate a vector autoregression model of the form 
\eqx
Y_{r} = \left( B - A_{1}Y_{r-1} - A_{2}Y_{r-2} \right) dt +  \sqrt{dt} Z_{r} \ , \ \ \ Z_{r} \sim \calN(0, \Sigma)
\eqend
where each $A_{1,2}$ is a $mn+1 \times mn+1$ coefficient matrix, $B = (B_0, B_1,\ldots, B_{mn} )'$ is an intercept, and $\Sigma$ is a volatility matrix.

\textbf{Constructing~$\P$ --}
we train the model to historical data from EURO STOXX~50, using standard regression techniques from the Statsmodels Python package \cite{STATSMODELS}. Once the model has been trained, we can simulate new sample paths of log spot returns and discrete local volatilities by sampling new noise variables $Z_{r}$ and stepping the model forward. We then convert the DLVs to option prices using the methods detailed above, so that we can simulate market states of spot and option prices.  

We generate $10^5$ paths, of length 30 days from a VAR model, where each path consists of spot and both put and call option prices on a grid of maturities $ \{ 20, 40, 60 \}$ and relative strikes $\{ 0.85, \ldots, 1.15 \}$. 

\textbf{Constructing~$\Q^*$ --}
we construct the risk neutral measure $\Q^*$ by solving \ref{eq:max_me2tc} with the exponential utility, using proportional transaction costs for all instruments set to $\gamma = 0.001$.
 To parametrize our policy action, we use a two layer feedforward neural network, with 64 units in each layer and ReLU activation functions. We train for 2000 epochs on the training set of $10^5$ paths.

\textbf{Assessing Performance --}
Figure~\ref{fig:realised_payoffl} compares out-of-sample the expected value of the option payoffs vs.~their prices for the full grid of calls and puts under both the statistical and the risk-free measure in relation to trading cost.

\begin{figure}[h]
  \centering
\includegraphics[width=0.9\linewidth]{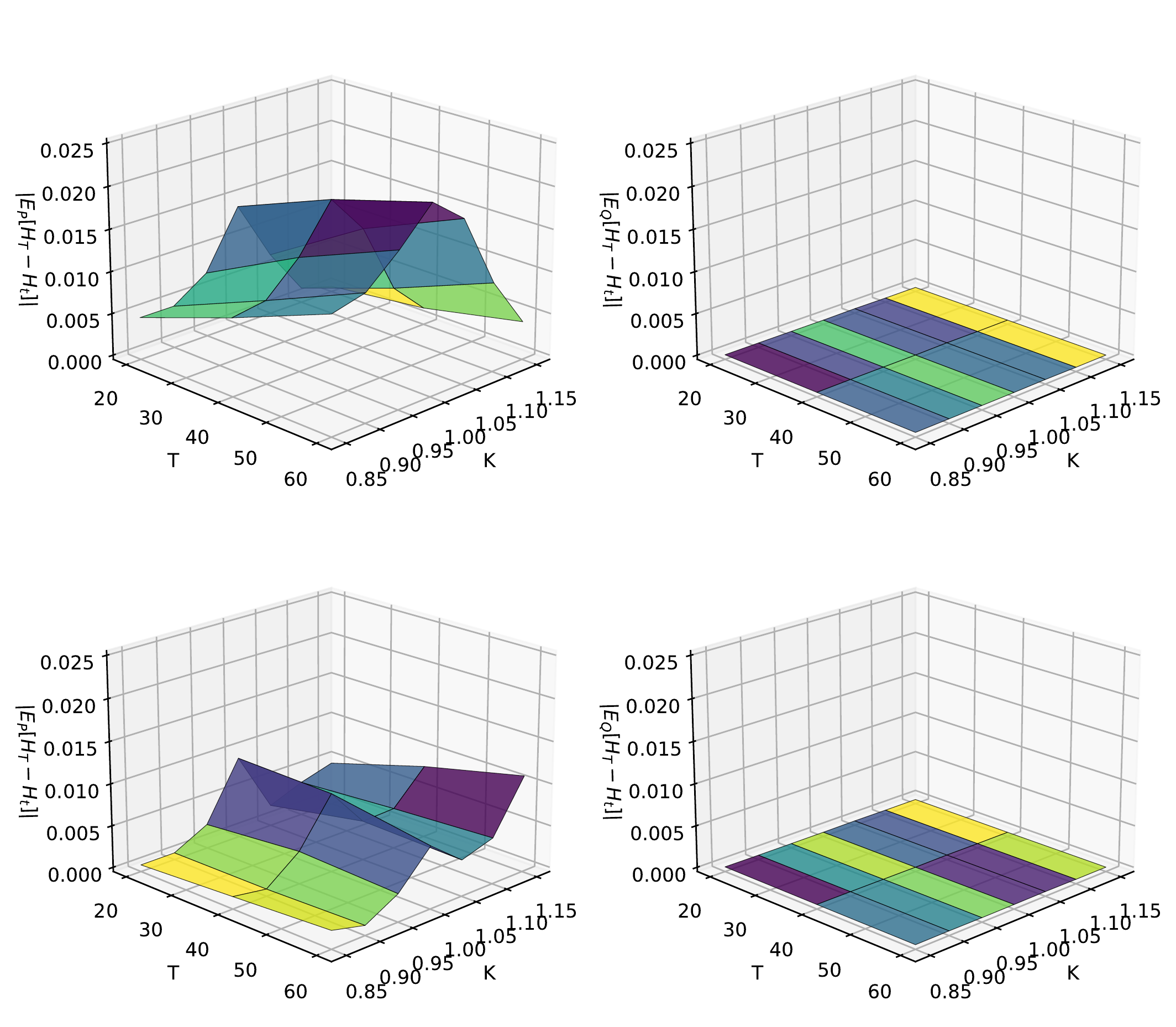}
  \caption{ Average realised drift for call options (top) and put options (bottom) under the $\P$ market simulator (left) and $\Q^*$ simulator (right), by strike and maturity.}
  \label{fig:realised_payoffl}
\end{figure}

The expected payoff under the changed measure has  been flattened to zero, and now lies within the transaction cost level, so that the tradable drift has been removed.

To further confirm that statistical arbitrage has indeed been eliminated from the market simulator under~$\Q^*$, we train a second strategy under the new measure, with identical neural network architecture and the same utility
function but incresed transaction cost~$\gamma = 0.002$. Figure \ref{fig:var_pnl} shows the distributions of terminal gains of respective estimated optimal strategies under $\P$ and $\Q$. We compare the method using the exponential utility, and the adjusted mean-volatility utility $u(x) := ( 1+ \lambda x - \sqrt{1 + \lambda^2 x^2})/\lambda$. In both cases, the distribution of gains is now tightly centred at zero confirming that statistical arbitrage has been removed. 
\begin{figure}[!h]
  \centering
  \includegraphics[width=0.9\linewidth]{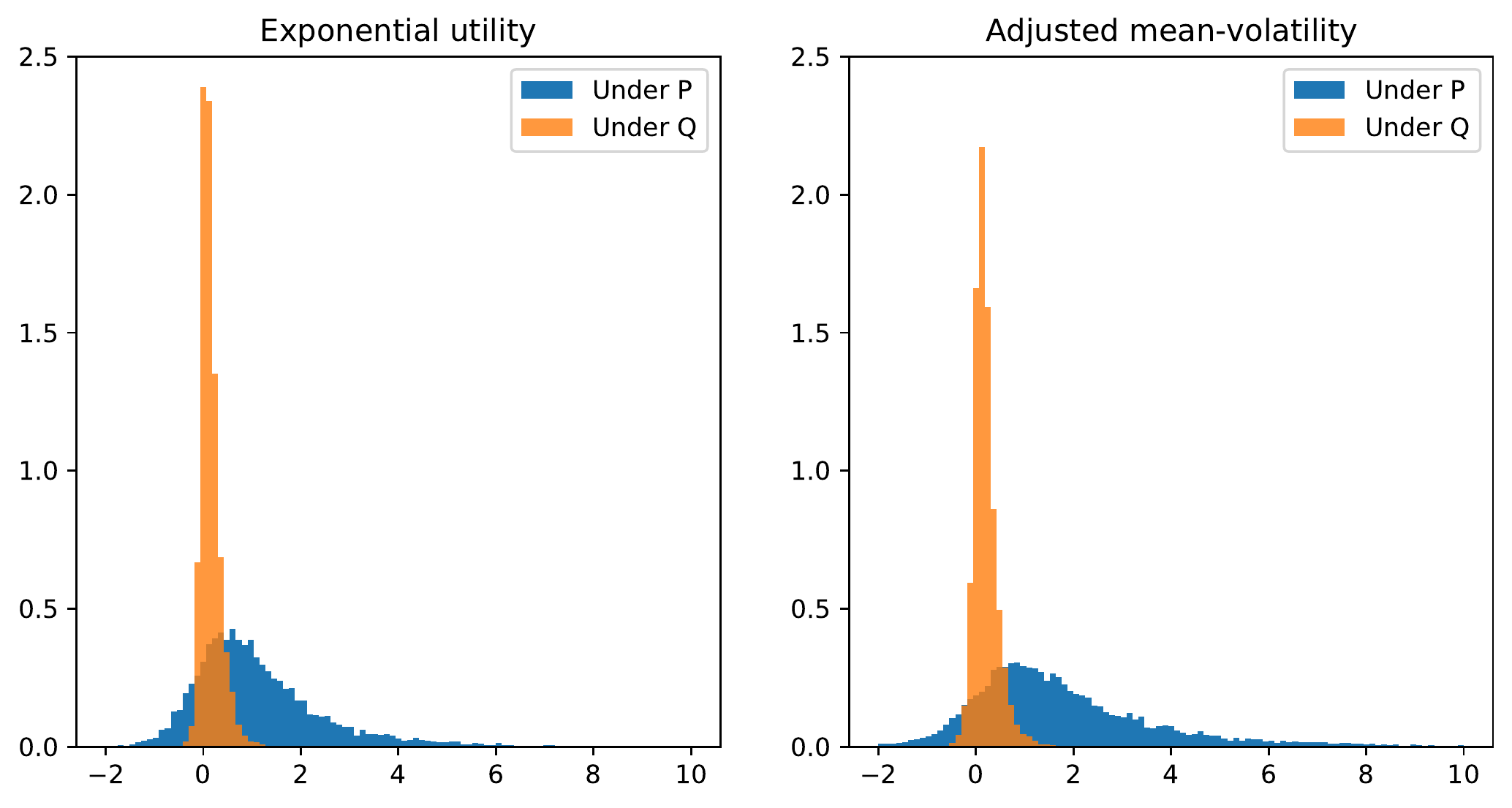}
  \caption{ Gains distribution of estimated optimal policy under $\P$ and $\Q^*$ for exponential utility (left) and adjusted mean-volatility (right).}
  \label{fig:var_pnl}
\end{figure}

\subsection{GAN market simulator}

To demonstrate the flexibility of our approach, we now apply it to a more data driven simulator for spot and option prices based on Generative Adversarial Networks (GANs) \cite{GAN} as in\cite{DHOPT}.

We illustrate the effect on Deep Hedging of changing measure with the following numerical experiment. We hedge a short position in a digital call option, with market instruments being the spot and at the money call options with maturities $20$ and $40$ days. We first train a network under the zero portfolio to find a maximal statistical arbitrage strategy, then use this to construct the risk neutral density. We then train two Deep Hedging networks to hedge the digital, one under the original, unweighted, market and one under the risk neutral market. All networks are trained to maximize exponential utility. Figure \ref{fig:hedge_pnl} compares the final hedged PNL of the two strategies on the left, and the PNL of the strategies, with the statistical arbitrage component subtracted, on the right (i.e. $a_{\Q}^*$ vs. $a_{\P}^* - a_0^*$). The distribution of PNL from the risk neutral hedge is clearly less wide tailed, and the righthand plot demonstrates that we have removed the statistical arbitrage element as the distributions now align.

\begin{figure}
  \centering
\includegraphics[width=0.9\linewidth]{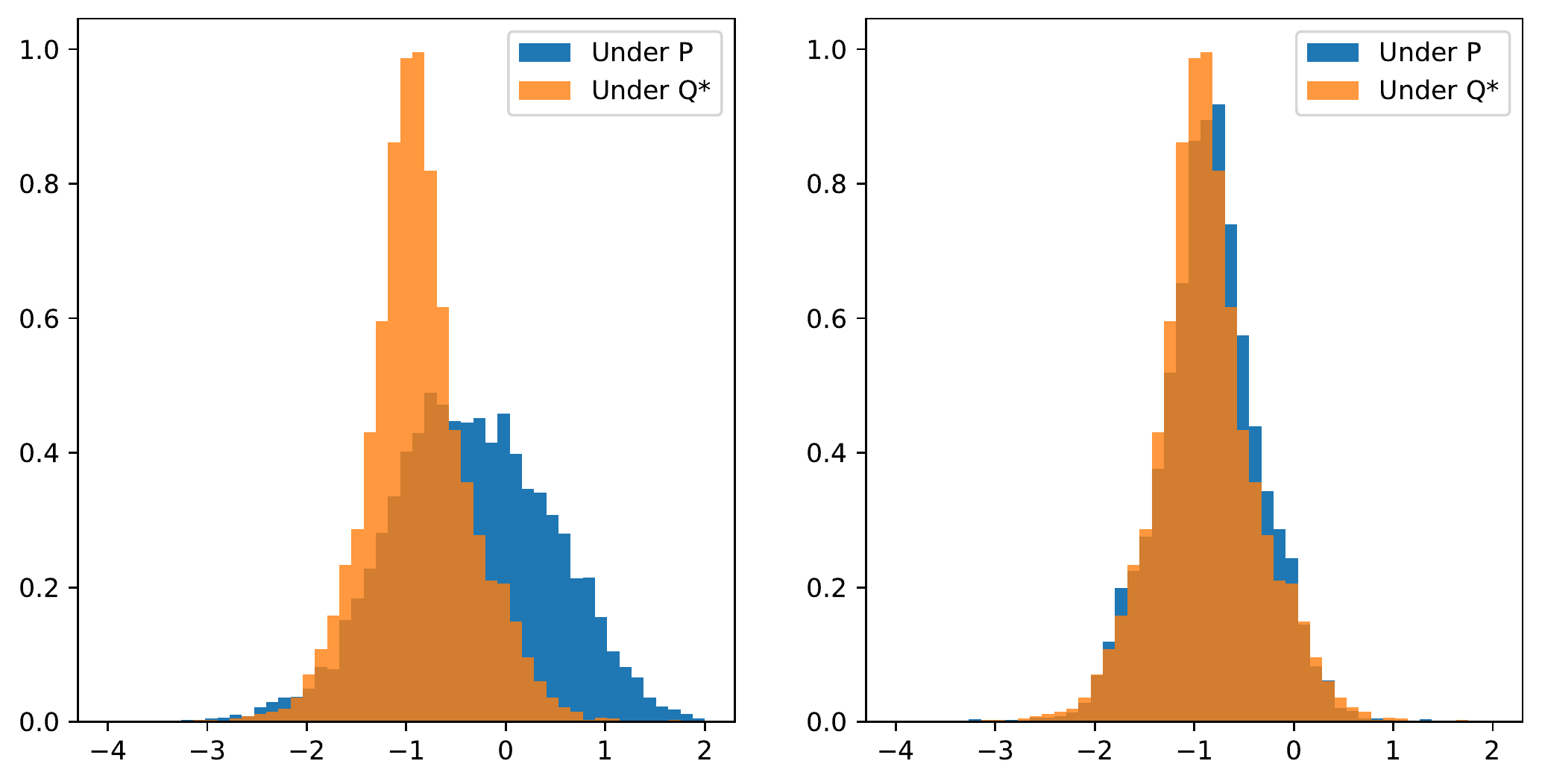}
  \caption{Comparison of the distribution of hedged PNL from a Deep Hedging strategy trained under $\P$ vs one trained under $\Q^*$ (left) and with statistical arbitrage removed (right). Both strategies are evaluated under $\P$. }
  \label{fig:hedge_pnl}
\end{figure}

\textbf{Robustness of~$\Q^*$ --}
by removing the statistical arbitrage component of the strategy, the risk neutral hedge $a_{\Q}^*$ represents a more robust hedge with respect to uncertainty in the market simulator, i.e. when the future distribution of the market at model deployment differs slightly from the distribution of the training data generated by the market simulator. Consider the case where the future market returns follow a distribution $\tilde{\P}$, which is similar to $\P$ in the sense that $H(\tilde{\P} | \P) \leq c$ for some small~$c$ where~$H$ is the relative entropy. For illustration, we can construct such a measure by simply perturbing the weights of the simulated paths slightly.

In particular, we consider perturbations which are unfavourable for the original strategy, where the strategy was over-reliant on the perceived drift in the market, which is no longer present under $\tilde \P$. Figure \ref{fig:robust_hedge_pnl} shows the new PNL distributions under measures $\tilde{\P}$ with $H(\tilde{\P} | \P) = c$ for $c=0.05, 0.5.$ What is striking is that a relatively small shift in the measure can significantly worsen the distribution of hedged PNL of the original Deep Hedging model, but that the distribution of PNL of the model trained on the risk neutral measure remains practically invariant, indicating that the model performances more consistently with respect to uncertainty. Naturally, the method provides robustness against estimation errors for mean returns of the underlying assets.

\begin{figure}
  \centering
\includegraphics[width=0.9\linewidth]{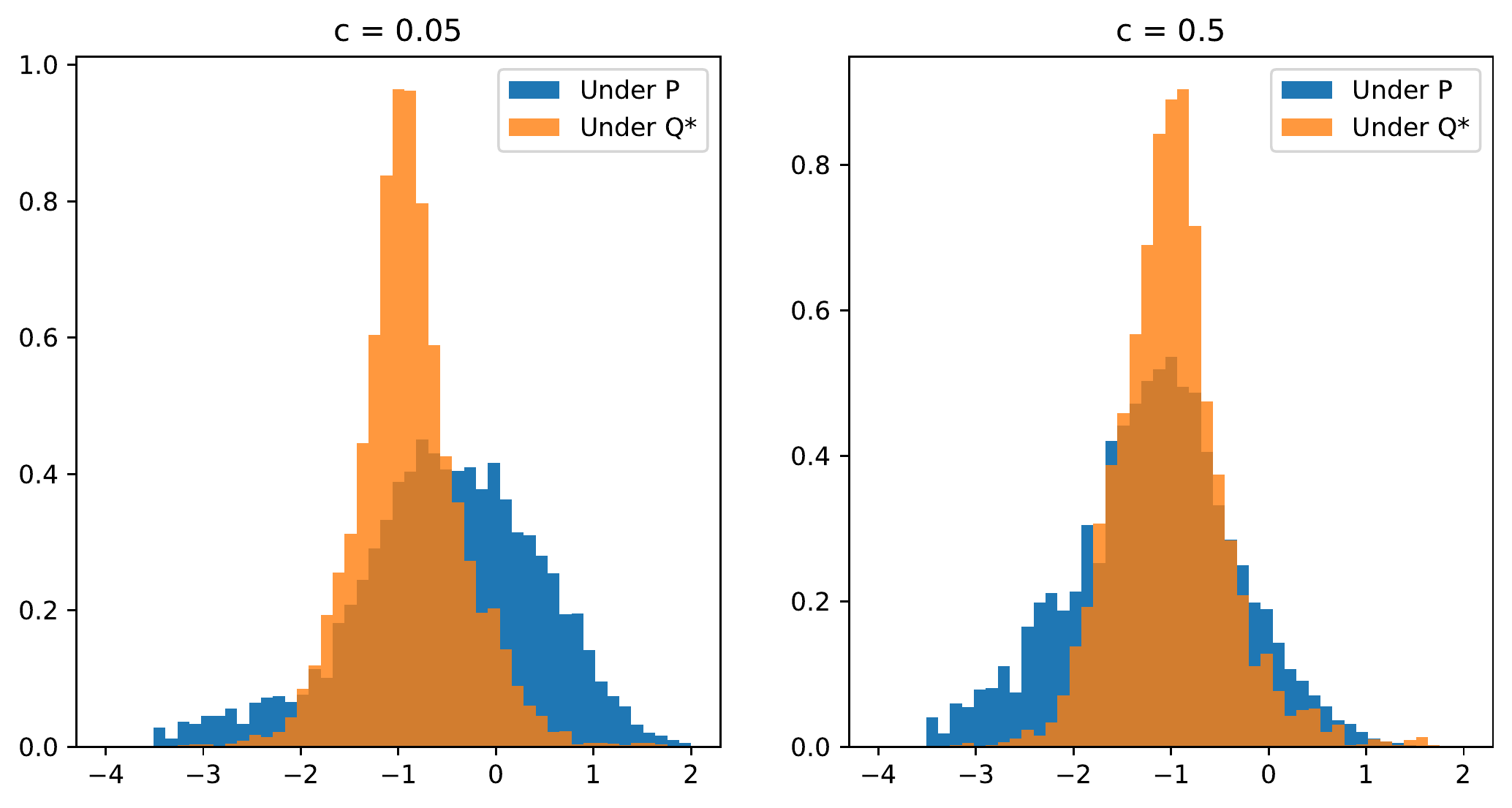}
  \caption{Comparison of hedged PNL from both strategies with respect to uncertainty, evaluated under $\tilde{\P}$ with $H(\tilde{\P} | \P) = c$ for $c=0.05, 0.5$.}
  \label{fig:robust_hedge_pnl}
\end{figure}

\section*{Conclusion}

We have presented a numerically efficient method for computing a risk-neutral density for a set of paths over a number of time steps. Our method is applicable to paths of derivatives and option prices
in particular, hence we effectively provide a framework for statistically learned \emph{stochastic implied volatility} via the application of machine learning tools. 
Our method is generic and does not depend on the market simulator itself, except that it requires that the simulator does not produce classic arbitrage opportunities. It also caters naturally
for transaction costs and trading constraints, and is easily extended to multiple assets.

The method is particularly useful to introduce robustness to a utility-based machine learning approach to the hedging of derivatives, where the use of simulated data is essential to train a `Deep Hedging' neural network model. 
If trained directly on data from the statistical measure, in addition to risk management of the derivative portfolio, the Deep Hedging agent will pursue statistical arbitrage opportunities that appear in the data, thus the hedge action will be polluted by drifts present in the simulated data. By applying our method, we remove any statistical arbitrage opportunities from the simulated data, resulting in a policy from the Deep Hedging agent that seeks to only manage the risk of the derivative portfolio, without exploiting any drifts. This in turn makes the suggested hedge more robust to any uncertainty inherent in the simulated data.


\printbibliography

\section*{Disclaimer}

{\small
Opinions and estimates constitute our judgement as of the date of this Material, are for informational purposes only and are subject to change without notice. It is not a research report and is not intended as such. Past performance is not indicative of future results. This Material is not the product of J.P. Morgan’s Research Department and therefore, has not been prepared in accordance with legal requirements to promote the independence of research, including but not limited to, the prohibition on the dealing ahead of the dissemination of investment research. This Material is not intended as research, a recommendation, advice, offer or solicitation for the purchase or sale of any financial product or service, or to be used in any way for evaluating the merits of participating in any transaction. Please consult your own advisors regarding legal, tax, accounting or any other aspects including suitability implications for your particular circumstances.  J.P. Morgan disclaims any responsibility or liability whatsoever for the quality, accuracy or completeness of the information herein, and for any reliance on, or use of this material in any way.  \\
\noindent
 Important disclosures at: www.jpmorgan.com/disclosures}


\end{document}